\pdfoutput=1
\documentclass[sigconf, nonacm]{acmart}
\usepackage{amsmath,amssymb,amsthm,mathtools}
\usepackage{xcolor}
\usepackage{hyperref}
\usepackage{makecell}
\usepackage{booktabs}
\usepackage{threeparttable}

\usepackage{subcaption}

\usepackage[skip=5pt]{caption} % Adjust spacing below captions
\usepackage{balance}
\usepackage{tablefootnote}
\usepackage{multirow}
\usepackage{enumitem}

\newcommand{\Paragraph}[1]{\par\addvspace{2pt}\noindent\textbf{#1.}}

\theoremstyle{plain}
\newtheorem{theorem}{Theorem}

\newtheorem{corollary}[theorem]{Corollary}

\theoremstyle{remark}

\definecolor{okBG}{HTML}{009E73}
\definecolor{okV}{HTML}{D55E00}
\definecolor{okB}{HTML}{0072B2}
\definecolor{okSB}{HTML}{56B4E9}

\newcounter{discussion}
\newcommand{\DiscussionPara}[2]{%
  \refstepcounter{discussion}\label{#1}%
  \noindent\textbf{\textsc{\textit{Discussion}~\thediscussion}: #2.}\hspace{0.55em}\ignorespaces
}

\begin{document}
    \title{
    A Power Law in Logarithm's Clothing: On the Scalability of Graph-Based Vector Search}

	\author{Sajad Faghfoor Maghrebi}
	\orcid{0009-0009-0882-4999}
	\affiliation{%
		  \institution{University of Toronto}
		  \country{Canada}
		}
	\email{smaghrebi@cs.toronto.edu}

    \author{Navid Eslami}
	\orcid{0009-0001-3522-5846}
	\affiliation{%
		  \institution{University of Toronto}
		  \country{Canada}
		}
	\email{navideslami@cs.toronto.edu}
	
	\author{Niv Dayan}
	\orcid{0000-0003-0314-0167}
	\affiliation{%
		  \institution{University of Toronto}
		  \country{Canada}
		}
	\email{nivdayan@cs.toronto.edu}

\begin{abstract}
Most vector databases rely on graph-based indexes, notably HNSW and Vamana, for approximate nearest neighbor search. With the recent widespread adoption of embedding models, the datasets these databases store have grown rapidly. This growth raises a natural question: at a fixed accuracy, how does search cost scale with dataset size? The prevailing answer is that search cost grows poly-logarithmically. Yet this claim is formally proven only under special conditions; for the indexes used in practice, it is asserted without proof. The claim is also largely untested: standard benchmarks measure search cost at a single dataset size, not across~sizes.

We put this claim to the test and measure how search cost scales across dataset sizes. The answer turns out to depend on the scale itself. (1)~As long as the dataset size~$N$ is small relative to the data's intrinsic dimensionality, search cost grows in step with $N^{c}$ for a constant $0 < c < 1$. We call this scaling a \emph{Sublinear Power Law}. (2)~Once $N$ is large enough, the growth slows to subpolynomial, consistent with the prevailing poly-logarithmic claim. Empirically, the Sublinear Power Law appears on every dataset, mostly up to its full size, and at every recall target, query hardness level, and index configuration we test. Still, the transition to subpolynomial growth does appear on the two datasets that grow large enough relative to their intrinsic dimensionality. Underlying both behaviors is a single mechanism: the intrinsic dimensionality a dataset exhibits grows with its size, until the data resolves its underlying distribution. Higher intrinsic dimensionality packs more vectors into the small neighborhood around the query that the search must examine.

We present a unifying theory that formalizes beam-search cost and explains all of our observations. For both exact and bounded-degree constructions, we prove the Sublinear Power Law and the eventual transition to poly-logarithmic scaling, and derive the scale at which the transition occurs. We also develop models that predict the power-law exponents for any recall target and index configuration. These models provide a principled way to navigate trade-offs among search cost, insertion cost, and recall as data grows.
\end{abstract}
	
	\maketitle

\section{Introduction}
\label{sec:introduct}

Unstructured data, spanning documents, images, videos, and audio, has grown rapidly~\cite{analyticDB-V}. Querying such datasets using traditional approaches based on attribute values and keywords fails to capture semantic similarity~\cite{wang2021milvus}. Meanwhile, advances in learning-based methods, especially deep neural networks, have enabled models that transform unstructured data into dense vector representations suitable for semantic search~\cite{reimers2019sentencebertsentenceembeddingsusing}. These embedding models are trained so that semantic similarity corresponds to geometric proximity: the vectors closest to a query usually represent the most relevant data items.

\Paragraph{Vector Search}
Together, these trends have driven the rise of vector databases, which index the embeddings produced by these models~\cite{wang2021milvus, baevski2020wav2vec20frameworkselfsupervised}. Their core retrieval primitive is nearest neighbor search. However, computing the exact nearest neighbors is prohibitively expensive at scale~\cite{whynns, indyk1998approximateLSH, borodin1999lower}. Vector databases instead accept a small loss in accuracy and perform approximate nearest neighbor (ANN) search. To do so, they use an index that narrows the search to a small set of promising~candidates~\cite{NSG, DiskANN, HNSW}.

\Paragraph{Graph-based Indexes}
The ANN indexes that perform best in practice are graph-based~\cite{wang2021comprehensive, pmlr-v235-lu24l}. A graph-based index represents each vector as a node, connected to a subset of its nearby vectors, with a few longer-range edges that keep distant regions reachable in a few hops. To answer a query, the index runs beam search, a generalization of greedy search that walks the graph toward the query and returns the closest vectors encountered~\cite{HNSW, DiskANN}. Some graph constructions provide guarantees on accuracy and on reaching a query's neighborhood, but are often too expensive to build exactly~\cite{SNG, DiskANN}. Practical graph-based indexes draw on these exact constructions but build them with cheaper heuristics~\cite{HNSW, DiskANN}. The two most widely used such indexes, HNSW~\cite{HNSW} and Vamana~\cite{DiskANN}, power vector search across databases~\cite{milvus-index, weaviate-vectorindex, qdrant, cosmosdb-diskann, pgvector2023, pgvectorscale}, search engines~\cite{elastic_hnsw_merge_2025, opensearch-knn-tuning}, and~libraries~\cite{douze2025faisslibrary, nmslib, diskann-lib}.

\Paragraph{Sustaining Accuracy As Data Grows}
The data stored in vector databases grows continuously~\cite{singh2021freshdiskann, xu2023spfresh, baranchuk2023dedrift}. As the data grows, however, search accuracy---typically measured by \emph{recall}, the fraction of true nearest neighbors returned by the search---deteriorates~\cite{chatzakis2025darth, duan2025pgtuner, turbopuffer2024recall}. Maintaining stable recall therefore requires adjusting the index parameters in response~\cite{duan2025pgtuner, manohar2024parlayann, zhao2026grab}. A common practice is to periodically measure recall using exact search~\cite{qdrant_recall, vespa_recall, elastic_recall} and adjust parameters accordingly~\cite{manohar2024parlayann, turbopuffer2024recall, chatzakis2025darth}. If the measured recall falls short of the user's recall target, the user raises the search effort, remeasures, and repeats until the target is reached. Recent work in academia as well as in industry~\cite{zilliz_autoindex, turbopuffer2024recall} improves upon this practice by using statistical or learned models of the search effort required to achieve a recall target~\cite{chatzakis2025darth, zhang2026distribution, earlyterm2020}. Underlying all these approaches is a natural research question: \emph{At a fixed level of accuracy, how does search cost scale as the dataset grows?}

\Paragraph{The Logarithmic Folklore}
The literature answers this question with poly-logarithmic scaling. For exact graph constructions, the claim is sometimes backed by proof~\cite{SNG, indyk2023worst}. For the heuristic indexes used in practice, it is asserted and left unproven. In particular, the paper introducing HNSW claims logarithmic search cost, but the scaling experiments supporting this claim use only low-dimensional synthetic vectors ($d \le 8$), and its largest experiment deviates from logarithmic scaling~\cite{HNSW}. The paper introducing Vamana argues that greedy search on its exact construction takes logarithmically many steps, but reports no measurement of search cost across dataset sizes~\cite{DiskANN}. The claim has nonetheless propagated into industry documentation~\cite{douze2025faisslibrary, milvus-index, opensearch-knn-tuning, pinecone_hnsw, redis-hnsw-blog, weaviate-vectorindex} and surveys~\cite{wang2021comprehensive, pan2024survey, pan2024vector}, which restate it as a settled property of these indexes. It also remains largely untested: benchmarks commonly measure the recall--cost trade-off at a single dataset size, not the scaling of cost across sizes at fixed recall~\cite{aumuller2020ann, simhadri2024results, jaasaari2025vibe, wang2021comprehensive}.

\Paragraph{Observations}
We put the folklore to the test and make two observations: (1)~When the dataset size~$N$ is small relative to the data's intrinsic dimensionality, the average search cost of a graph-based index at fixed recall grows in step with $N^{c}$ for a constant $0 < c < 1$. We call this scaling a \emph{Sublinear Power Law}. (2)~Once $N$ is large enough, the growth slows to subpolynomial, consistent with the prevailing poly-logarithmic claim. We verify these observations through extensive experiments: the Sublinear Power Law appears on all eight of our datasets, at recall targets between 90\% and 99\%, under many index configurations, and for hard and easy queries alike (Section~\ref{sec:query_cost}). On two datasets, the data grows past the transition point and search cost begins to slow below the power law~(Sections~\ref{sec:query_cost} and~\ref{sec:billion_scale}).

\Paragraph{Core Contribution: Unifying Theory} By analyzing the anatomy of query cost, we find the mechanism behind these scaling behaviors: search cost is governed by the number of vectors in a small neighborhood around the query. The size of that neighborhood grows exponentially with the intrinsic dimensionality the dataset exhibits at its current size. As long as the dataset sparsely samples its underlying distribution, that dimensionality grows in step with~$\log N$. An exponential in $\log N$ is a power of $N$, so the neighborhood, and hence the cost, grows as a power law. Once the dataset samples the distribution densely, intrinsic dimensionality growth slows and cost growth approaches poly-logarithmic (Section~\ref{sec:anatomy_of_query_cost}). Our unifying theory formalizes this claim. For both exact and bounded-degree constructions\footnote{The \emph{exact construction} applies the index's pruning rule with every other vector as a candidate and keeps every surviving edge. The \emph{bounded-degree construction} additionally caps each vector's degree, as practical indexes do.} of HNSW and Vamana, we prove the Sublinear Power Law and the eventual transition to poly-logarithmic scaling, derive the scale at which that transition occurs, and extend the results to other graph-based indexes (Section~\ref{sec:theory-power-law}).

\Paragraph{Additional Contribution: Cost Models} Under the Sublinear Power Law, we model the query cost as a function of the recall target and the construction parameters, across datasets, for both HNSW and Vamana (Section~\ref{sec:query-cost-model}). For insertions, we model the cost analogously (Section~\ref{sec:insert_cost}). Combined, the models help users navigate the trade-offs between query cost, insertion cost, and recall.

\section{Background}
\label{sec:background}

This section provides background on vector databases, graph indexes, and how they are thought to scale with dataset size. A vector database stores complex objects as points in a high-dimensional space, produced by an embedding model that maps similar objects to nearby points~\cite{wang2024mustapp,jing2025largeapp,gao2023retrievalapp,joshiintroductionapp}. A query is embedded into that same space by the same embedding model, and the search process returns the point(s) closest to it. Formally, given a set of points \(S \subset \mathbb{R}^d\) and a query point \(q\), the goal is to identify the point in \(S\) closest to \(q\). The \(k\)-nearest neighbor variant generalizes this, returning the \(k\) points in \(S\) with the smallest distance to \(q\). 
The embedding model dictates the dimensionality \(d\), which ranges from hundreds to thousands (e.g., 1536 for OpenAI's text-embedding-3-small and 3072 for text-embedding-3-large~\cite{openai2024embeddings}).  The dimensionality must be high enough to represent the many attributes needed to distinguish objects across a large and diverse corpus.

\Paragraph{Curse of Dimensionality \& Exact Search}
In such high-di\-men\-sion\-al spaces, search is hard due to the \emph{curse of dimensionality}~\cite{indyk1998approximateLSH, beyer1999nearest}, a phenomenon whereby most points lie at similar distances from the query, and the nearest neighbor is only slightly closer than the rest~\cite{song2025trim, he2012difficultynearestneighborsearch}.
Because distances concentrate, points do not separate cleanly, so traditional multidimensional indexes (e.g., R-trees~\cite{Rtree} or KD-trees~\cite{bentley1975multidimensionalkdtree}) lose their pruning power~\cite{weber1998quantitativequant, bohm2000cost, li2017fast}.
To distinguish the nearest neighbors from the many nearly equidistant points, the search must examine a large fraction of the dataset. This is expensive~\cite{navarro2002searching} and degrades to a linear scan in the worst~case~\cite{bringmann2021fine}.

\Paragraph{Manifolds \& Intrinsic Dimensionality} Real datasets, despite their high embedding dimension, do not span the entire embedding space. Instead, the data resides near lower-dimensional structures embedded within this space~\cite{fefferman2016testing}. These structures are called \emph{manifolds}~\cite{brown2022verifying}, each of which behaves locally like a flat, Euclidean space. Each manifold's dimensionality is considered the \emph{intrinsic dimensionality} of the data residing near the manifold~\cite{levina2004maximum}. 
What makes search expensive is this intrinsic dimensionality, not the number of dimensions in the embedding. It is smaller than the embedding dimension, but still big enough that exact search stays slow~\cite{pestov2008axiomatic}.

\Paragraph{Approximate Nearest Neighbor Search}
Approximate nearest neighbor (ANN) search sidesteps this expensive query cost by giving up the guarantee of returning the exact nearest neighbors.
This allows the search procedure to apply a rough pruning rule that discards a sizable portion of the search space while ensuring that a nearest neighbor is retained with high probability.
Query accuracy for ANN is commonly formalized in two ways. The first is the classical $(1+\epsilon)$-approximate distance guarantee: the returned point's distance to query $q$ is within a factor of $(1+\epsilon)$ of the nearest neighbor distance. The same relaxation extends to the $k$-nearest neighbor case. This guarantee constrains only the distance, not whether the point is a true nearest neighbor.
The second, \emph{recall}, avoids this issue, and is defined as the fraction of the true nearest neighbors that the search returns. Formally, with $R$ the set of true $k$ nearest neighbors and $R'$ the returned set, recall is $|R \cap R'| / k$. A higher recall value indicates a closer match to the exact search result.
This tolerance for inexact answers motivates a class of data structures, called \emph{vector indexes}, built to answer ANN queries~\cite{wang2021comprehensive, chronis2025filtered}.

\begin{figure}[t]
    \centering
    \includegraphics[width=0.8\columnwidth]{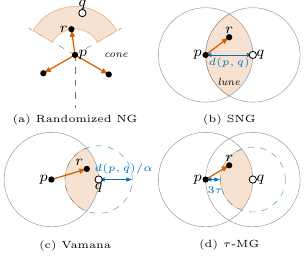}
    \Description{Schematic subfigures, one per graph construction, each showing nodes p, q, and r with a red directed edge from p to r. A shaded region around r, whose shape differs across subfigures, marks the area in which r occludes q and suppresses the candidate edge from p to q.}
    \caption{Edge selection across graph constructions. In each subfigure, $p$ decides whether to add an edge to $q$. The node $r$ already connected to $p$ (red directed edge) lies in the shaded region, where it occludes $q$ and suppresses the edge $p$--$q$.}
    \label{fig:graph_constructions}
\end{figure}

\Paragraph{Graph-based Indexes} Graph-based indexes are widely regarded as the state of the art among vector indexes~\cite{HNSW, DiskANN}. They answer ANN queries by traversing a graph whose nodes are the data points. Edges connect each point to a small set of other points, arranged so that search toward a query approaches its nearest neighbors. All graph-based indexes draw these edges following two core intuitions: (1) each point's edges should lead to points near it, so that once the search closes in on the query, it has easy access to the true nearest neighbors, and (2) the edges should spread across distinct directions, so that from any position some edge makes progress toward the query. 
The two intuitions act in tandem: every point has many nearby points it could connect to, but diversity decides which few of them become edges, sparsifying the graph while preserving navigability. A query that comes from the same distribution as the data lands amid points whose edges already cover its general direction, and is reachable for the same reason the data points are. 
Next, we present the popular graph constructions, beginning with the theoretically grounded designs and moving to the practical relaxations they inspired.

\Paragraph{Randomized Neighborhood Graph}
The \emph{Randomized Neighborhood Graph} (Randomized NG)~\cite{SNG} is one of the earliest graph-based indexes with a theoretical guarantee for ANN search (Figure~\ref{fig:graph_constructions}-a). 
It builds a directed graph over the data points. Specifically, it guarantees that for any query $q$, the search finds a point whose distance to $q$ is within a factor of $1+\epsilon$ of the distance to $q$'s nearest neighbor.
To build this graph, the construction process partitions the space around each point \(p\) into cones with apex \(p\). Within each cone, it scans the enclosed points in random order and links \(p\) to every point closer to \(p\) than all previously scanned points, i.e., every new running minimum. This scan results in an average of $O(\log N)$ neighbors in every cone, at approximately geometrically increasing distances from \(p\). Drawing the edges of a single point requires comparing it against all other points to track the running minimum per cone, so drawing the edges of all \(N\) points costs \(O(N^2)\).

The query cost of a graph-based index factors into the number of search steps and the out-degree at each step. For Randomized NG, both are poly-logarithmic in~$N$. The search takes $O(\log^2 N)$ steps: an ideal straight path to $q$ takes $O(\log N)$ steps, each halving the remaining candidates as in a skip list~\cite{skiplist}, and each ideal step expands into $O(\log N)$ actual steps as the path bends in high dimensions. The out-degree is $O((1/\epsilon)^{d-1}\log N)$: the $(1/\epsilon)^{d-1}$ factor accounts for the number of cones around each point, and each cone contributes $O(\log N)$ neighbors~\cite{SNG}.
Although both factors scale poly-logarithmically with~$N$, the $(1/\epsilon)^{d-1}$ factor grows large in high dimension: smaller $\epsilon$ needs narrower cones, and higher dimension needs more cones to cover all directions. The graph therefore grows dense on both counts, which slows search~\cite{SNG}.

\begin{table}[t]
    \centering
    \scriptsize
    \renewcommand{\arraystretch}{0.95}
    \setlength{\tabcolsep}{3pt}
    \caption{Notations used in the paper.}
    \label{tab:notation}
    \begin{tabular}{ll}
        \toprule
        \textbf{Symbol} & \textbf{Description} \\
        \midrule
        \multicolumn{2}{l}{\textit{Dataset and query}} \\
        $N$                        & Number of vectors in the index \\
        $d_{\mathrm{amb}}$         & Ambient dimensionality of the embedding space \\
        $d_{\mathrm{int}}$         & Intrinsic dimensionality of the dataset \\
        $k$                        & Number of nearest neighbors returned per query \\
        $q$                        & A query vector \\
        \addlinespace
        \multicolumn{2}{l}{\textit{Index parameters}} \\
        $M$                        & Maximum out-degree of each node in the graph \\
        $ef_{\text{construction}}$, $ef_{\text{search}}$ & Candidate list size at build and search time \\
        \addlinespace
        \multicolumn{2}{l}{\textit{Cost metrics}} \\
        $\mathrm{dc}$              & Distance computations per query \\
        $\mathrm{dc}_{\text{build}}$ & Total distance computations to build an index of size~$N$ \\
        \addlinespace
        \multicolumn{2}{l}{\textit{Query-cost model (Eq.~\ref{eq:power-law}, \ref{eq:c_model}, \ref{eq:beta})}} \\
        $c$                        & Query-cost scaling exponent, $\mathrm{dc}\propto N^c$ \\
        $c_0$                      & Dataset-specific baseline exponent \\
        $\alpha$                   & Sensitivity of $c$ to recall demand $\ln(1/\delta)$ \\
        $\gamma_1$, $\gamma_2$     & Reduction in $c$ per $\ln(ef_{\text{construction}})$ and per $\ln(M)$ \\
        $\beta$                    & Search-side exponent, $\mathrm{dc}/q \propto ef_{\text{search}}^{\,\beta}$ \\
        \addlinespace
        \multicolumn{2}{l}{\textit{Build-cost model (Eq.~\ref{eq:build_model})}} \\
        $\kappa$                   & Dataset-specific intercept for build cost \\
        $\beta'$                   & Exponent of $ef_{\text{construction}}$ in build cost \\
        $c'_0$                     & Baseline size-dependence of build cost \\
        $\gamma'_2$                & Interaction coefficient between $\ln M$ and $\ln N$ \\
        \bottomrule
    \end{tabular}
\end{table}

\Paragraph{Sparse Neighborhood Graph} The same paper~\cite{SNG} that introduced Randomized NG also proposed a simpler index, the \emph{Sparse Neighborhood Graph} (SNG), which underlies many practical graph-based indexes today. SNG connects two points $p$ and $q$ only when no third point $r$ to which $p$ is already connected lies in the intersection of the two balls of radius $d(p,q)$ centered at $p$ and $q$. This intersection is called the \emph{lune} of $p$ and $q$ (Figure~\ref{fig:graph_constructions}-b). Intuitively, such a point $r$ provides a closer intermediate step from $p$ toward $q$, so the direct edge from $p$ to $q$ can be removed while preserving navigability. SNG is built by considering, for each point $p$, all other points in increasing order of distance from $p$ and pruning candidates that violate this rule. 
A naive construction takes $O(N^3)$ time, while the paper achieves $O(N^2)$~\cite{SNG} by drawing on algorithms for the \emph{relative neighborhood graph}\footnote{First introduced to capture the local structure of a set of points in a sparse manner.} from the 1980s~\cite{jaromczyk1987note,RNG-org}.

For SNG,\footnote{Proposed as MRNG~\cite{NSG}, with a construction identical~\cite{NSSGfu2021high} to that of SNG~\cite{SNG}.} a later work~\cite{NSG} proves that the search takes in expectation $O(N^{1/d}\log N / \triangle r)$ steps for uniformly distributed data and queries that are dataset points, where $\triangle r$ is the minimum gap between pairwise distances in the dataset.
A later paper~\cite{tauMG} proves that, under the same assumptions, $\triangle r$ shrinks as $O(N^{-1/d})$, bounding the steps by $O(N^{2/d}\log N)$. The out-degree, in turn, is independent of~$N$ and hence enters the query time only as a constant, but it is exponential in $d$: the SNG paper bounds it by the number of caps of angular diameter \(\pi/3\) fitting on the \(d\)-dimensional sphere, i.e., $2^{O(d)}$, and the empirical out-degree grows accordingly~\cite{SNG}. This makes each step, and hence the search, expensive in high dimensions even when the number of steps remains logarithmic.

\Paragraph{Vamana} Inspired by SNG, Vamana relaxes the pruning rule with a slack parameter \(\alpha \ge 1\)~\cite{DiskANN} (Figure~\ref{fig:graph_constructions}-c).
Vamana connects two points \(p\) and \(q\) only when no point \(r\) already connected to \(p\) comes closer to \(q\) than \(p\) by a factor of at least \(\alpha\), i.e., \(\alpha \cdot d(r,q) < d(p,q)\). At \(\alpha = 1\), this is exactly the pruning rule in SNG.
The intuition behind the slack parameter is that every node has an edge either to the query's nearest neighbor or to a point at least \(\alpha\) times closer to the query than the node itself is, so each greedy step cuts the remaining distance by the constant factor \(\alpha\).
Guaranteeing this progress costs extra edges: a larger \(\alpha\) makes the pruning condition harder to satisfy, so fewer edges are removed, increasing the graph's out-degree.

A later work~\cite{indyk2023worst} proves that greedy search reaches an $\left(\frac{\alpha+1}{\alpha-1}+\epsilon\right)$-approximate nearest neighbor in $O(\log_\alpha \tfrac{\Delta}{(\alpha-1)\epsilon})$ steps, where ~$\Delta$ is the ratio of the largest to smallest pairwise distance in the dataset (i.e., the \emph{spread}). 
This ratio stays small in practice even at large~$N$~\cite{indyk2023worst,song2025trim}.
Each step pays the out-degree of the visited point, bounded by $O((4\alpha)^d \log \Delta)$, again exponential in the dimensionality.\footnote{Here $d$ is the doubling dimension, a measure of the intrinsic dimensionality.} When~$\Delta$ grows at most polynomially with~$N$, as under a uniform distribution, both~$\log \Delta$ factors become $O(\log N)$, and multiplying steps by out-degree gives a query time poly-logarithmic in~$N$. Also, the construction process, as in SNG, prunes each point against the entire dataset, incurring a total cost at least quadratic in~$N$.

\Paragraph{\(\tau\)-Monotonic Graph} Also inspired by SNG, the \(\tau\)-Monotonic Graph (\(\tau\)-MG)~\cite{tauMG} extends SNG's guarantees to query points within distance \(\tau\) of some dataset point, for a fixed \(\tau\). The construction process connects any two points within distance \(3\tau\), and for farther pairs applies an SNG-like rule on the lune shrunk by \(3\tau\) (Figure~\ref{fig:graph_constructions}-d). These extra edges densify SNG and guarantee a path that moves at least \(\tau\) closer to the query at every step. Assuming uniformly distributed data and a query within \(\tau\) of some dataset point, the paper proves the search reaches the nearest neighbor in \(O(N^{1/d}\log^2 N)\) steps. 
The paper does not analyze the out-degree, treating it as a constant. In fact, because all points within \(3\tau\) are connected, the out-degree is \(\Omega((3\tau)^d)\), so the constant hidden in its query cost is exponential in the dimensionality. As in SNG, construction cost is at least quadratic in \(N\).

\begin{table}[t]
    \centering
    \scriptsize
    \setlength{\tabcolsep}{2pt}
    \caption{Datasets used in the experiments.}
    \label{table:datasets}
    \begin{tabular}{lcccccc}
        \toprule
        \textbf{Dataset} & \textbf{Dim.} & \textbf{Int. Dim. @ 1M} & \textbf{Type} & \textbf{Metric} & \textbf{Model} & \textbf{Sampled From} \\
        \midrule
        SIFT             & 128          & 19.2 & Image         & L2              & SIFT descriptors~\cite{lowe2004distinctive}  & SIFT1B~\cite{jegou2011searching} \\
        DEEP             & 96           & 20.7 & Image         & L2              & GoogLeNet~\cite{szegedy2015going}            & DEEP1B~\cite{babenko2016efficient} \\
        SpaceV           & 100          & 29.2 & Text          & L2              & SpaceV Superior~\cite{shan2021glow}          & SpaceV1B~\cite{spacev1b} \\
        Wiki             & 768          & 30.1 & Text          & L2              & MPNet~\cite{song2020mpnet}                    & Wiki-all~\cite{wikiall} \\
        GloVe            & 100          & 32.4 & Text          & Angular         & GloVe~\cite{pennington2014glove}             & GloVe-100~\cite{pennington2014glove} \\
        Rand64           & 64           & 40.2 & Synthetic     & Angular         & ---                                          & $\mathrm{Unif}(S^{63})$\tablefootnote{Generated by sampling each vector i.i.d.\ from $\mathcal{N}(0,I_{64})$ and normalizing to unit length, which yields a uniform distribution on $S^{63}$.} \\
        GIST             & 960          & 42.5 & Image         & L2              & GIST descriptors~\cite{oliva2001modeling}    & GIST1M~\cite{jegou2010product} \\
        OpenAI           & 1536         & 46.5 & Text          & L2              & ada-002~\cite{openai2022ada002}              & OpenAI-ArXiv~\cite{openai-arxiv-dataset} \\
        \bottomrule
    \end{tabular}
\end{table}

\Paragraph{Practical Graph-based Indexes}
The quadratic construction cost of the above graphs is the core obstacle to their practical use at scale~\cite{tauMG,SNG,NSG}. Modern practical indexes lower this cost by restricting the set of points to which each point can connect. Rather than treating every point as a candidate for a point's edges, they apply a greedy \emph{beam search} to find a much smaller set of nearby points, then apply the pruning rule within that smaller set. More precisely, when inserting a new point, the index runs a beam search on the current graph for the new point. Beam search maintains a candidate list of fixed size, here $ef_\text{construction}$, and starts from one or more entry points. It repeatedly expands the closest unexpanded candidate, keeping only the closest points found so far in the list. Once no unexpanded candidate remains, the list is pruned locally and the new point is connected to at most $M$ of the closest survivors. The same beam search procedure is used at query time, with a candidate list of size $ef_\text{search} \geq k$, and the top-$k$ results are selected from it.

The practical indexes differ mainly in the pruning rule each applies to this candidate list rather than to the full dataset.
HNSW~\cite{HNSW} organizes nodes into hierarchical layers, assigning each node to higher layers with exponentially decaying probability, and applies the SNG rule within each layer. The practical version of Vamana~\cite{DiskANN} applies the \(\alpha\)-based pruning rule; henceforth, we refer to this version of Vamana simply as Vamana.
The HNSW paper claims that its query cost scales logarithmically with dataset size while its construction cost stays below quadratic, at $O(N \log N)$~\cite{HNSW}.
The paper that introduced Vamana~\cite{DiskANN} argues that greedy search on its exact construction takes logarithmically many steps but leaves the query cost scaling of its practical version unstated; benchmarks nonetheless place it alongside HNSW as state-of-the-art~\cite{DiskANN, aumuller2020ann}. These logarithmic scalability claims and similar claims from the same line of work~\cite{NSG, NSSGfu2021high, tauMG, malkov2014approximate, alphaCG} have since propagated into vendor documentation and industry blogs, which restate them as a settled property of these indexes~\cite{elastic-labs-knn, milvus-index, opensearch-knn-tuning, pinecone_hnsw, redis-hnsw-blog, weaviate-vectorindex}.
In the next section, we evaluate how these indexes scale and bridge this gap between claimed and measured scaling both theoretically and~empirically.

\begin{figure}[t]
    \centering
    \includegraphics[width=1\columnwidth]{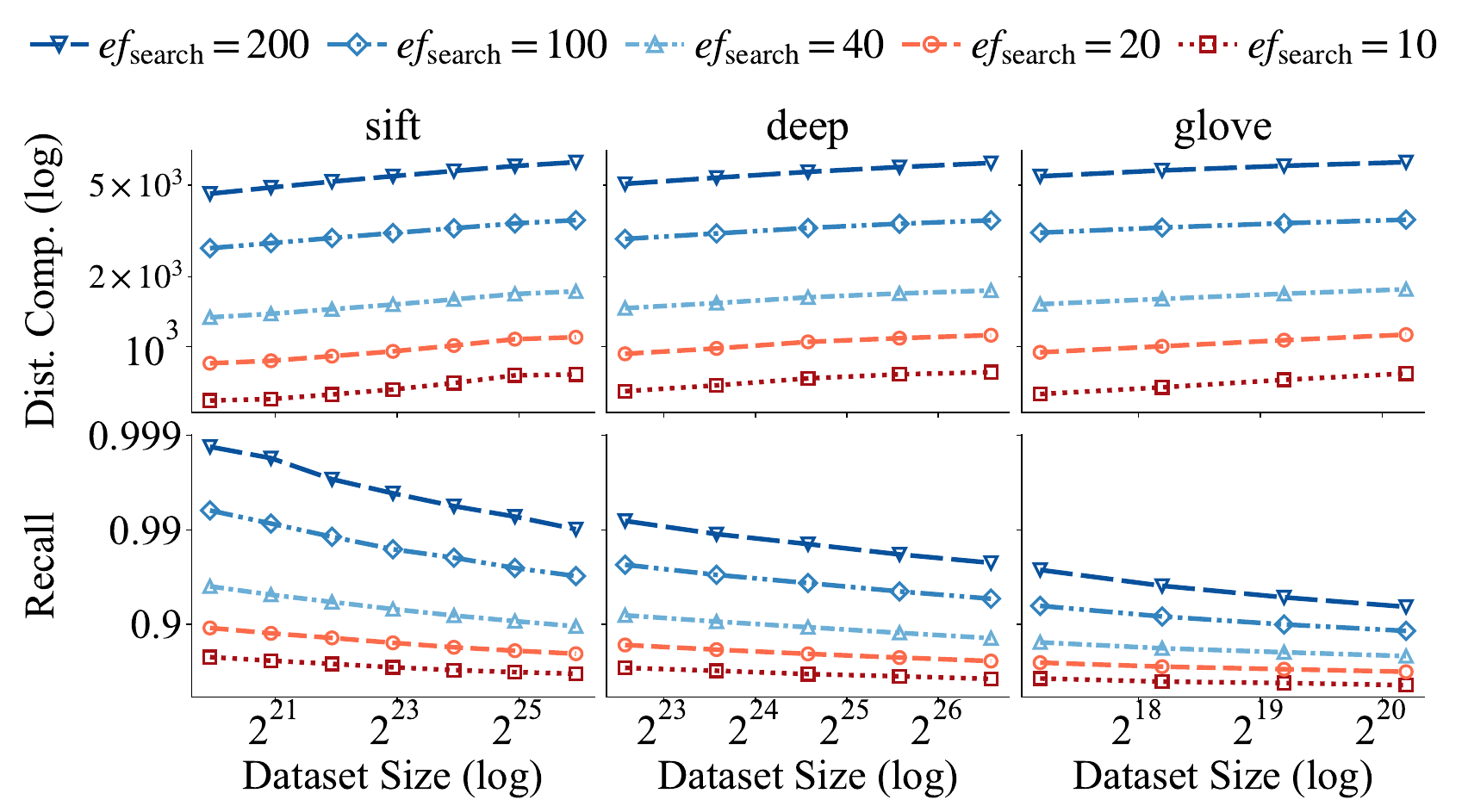}
    \Description{Two stacked line charts over dataset size, with one line per search beam width value. The top chart shows distance computations per query increasing as the dataset grows; the bottom chart shows recall decreasing.}
    \caption{As the data grows, distance computations per query rise (top) and recall drops (bottom). Here, we use the HNSW index with $ef_\text{construction}=200$, $M=32$, and $k=10$, with one curve per $ef_\text{search}$ value.}    
    \label{fig:hnsw_fixed_setup}
\end{figure}

\begin{figure*}[t]
    \centering
    \includegraphics[width=1\textwidth]{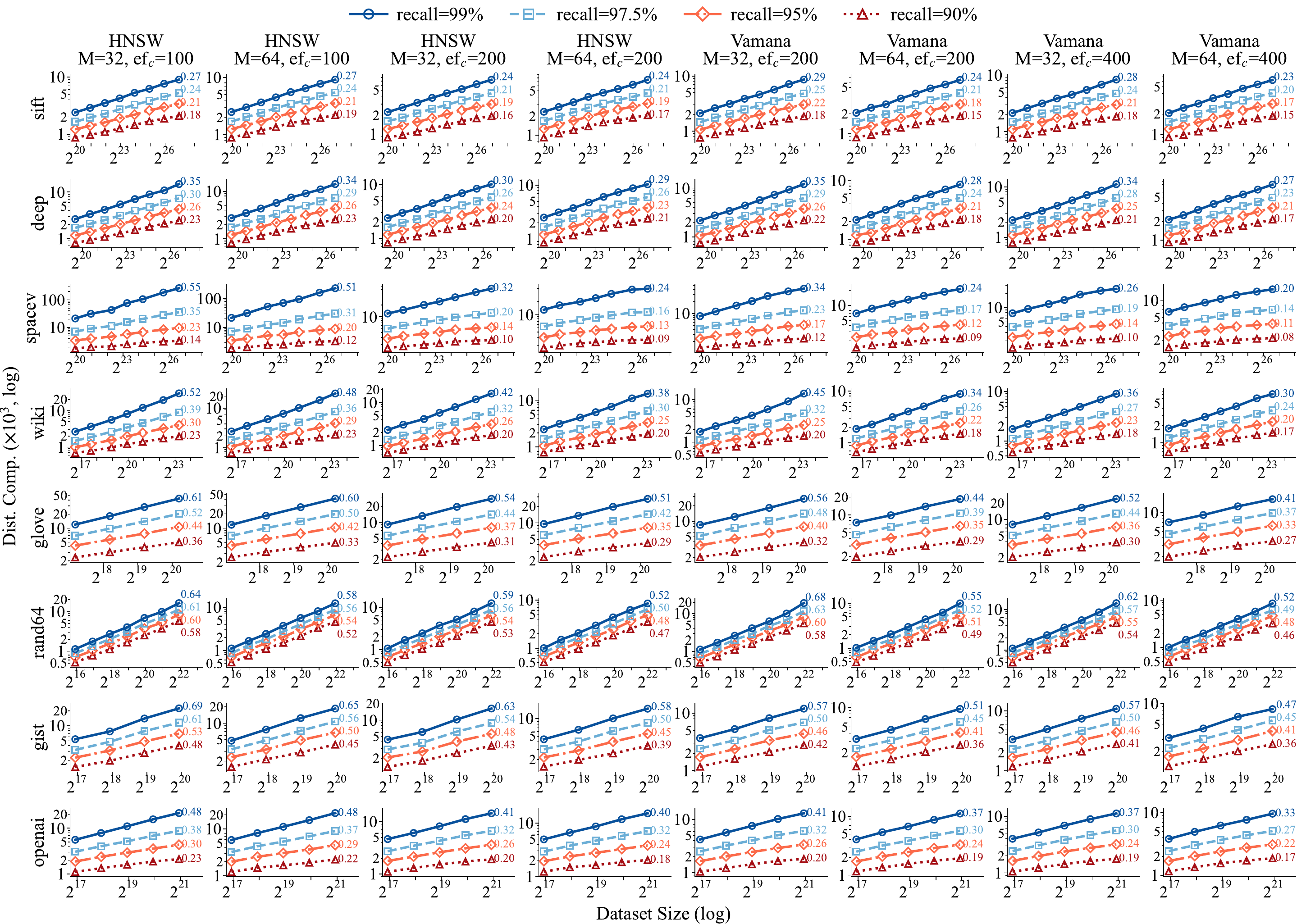}
    \Description{A row of log-log line plots, one per dataset, showing distance computations per query versus dataset size at several fixed recall levels. In every subfigure the curves form nearly straight lines, consistent with a Sublinear Power Law.}
    \caption{When recall is fixed, the number of distance computations (y-axis) increases with dataset size (x-axis) across the datasets from Table~\ref{table:datasets}, forming a linear pattern in the log-log diagram. This is consistent with a sublinear power-law. 
    }
    \label{fig:hnsw_faiss_scale}
\end{figure*}
	
\section{How Graph-Based Indexes Scale}
\label{sec:scaling_graph}

In this section, we study empirically how query cost scales as data grows, focusing on HNSW and Vamana\footnote{We use the in-memory version of Vamana, built with \texttt{build\_memory\_index}~\cite{diskann-lib}.}, the most widely used graph-based indexes~\cite{milvus-index, pan2024survey}. Later in Section~\ref{sec:why-a-power-law} we show that our findings generalize to other graph-based indexes.

\subsection{The Problem: Recall Deterioration}
\label{sec:det_recall}

Our first experiment verifies a known phenomenon: when the index parameters are held fixed, recall degrades as the dataset grows~\cite{duan2025pgtuner, manohar2024parlayann, zhao2026grab}.
To simulate data growth, we draw uniformly random subsets of increasing sizes from each of the datasets in Table~\ref{table:datasets} up to their full sizes.
Each subset is shuffled before index construction. When a dataset provides a canonical query set, we use it as our query workload. Otherwise, we reserve 10,000 uniformly sampled points as queries, excluding them from the set of points over which the index is built. 
To compute the true nearest neighbors for each query (i.e., its ground truth), we perform an exact search by linearly scanning the entire dataset. We then execute the query using the index and compare the returned results against the ground truth to compute recall.
We measure query cost in terms of the number of distance computations (\emph{dc}) per query as this is the dominant cost in graph traversal~\cite{NSG, yang2025effective, gao2023high}.
By so doing, we avoid conflating algorithmic performance with implementation idiosyncrasies (cache evictions, prefetching, hardware-optimized distance computations, etc.). All experiments in the paper follow this setup.

Figure~\ref{fig:hnsw_fixed_setup} shows the results on HNSW using three representative datasets (the remaining ones behave similarly). Each column represents a different dataset (SIFT, DEEP, and GloVe). The top row reports distance computations and the bottom recall. Each subfigure shows one curve per setting of the $ef_\text{search}$ parameter.
For each of these curves, we observe that distance computations per query slightly increase with dataset size~$N$ (top row), while recall drops significantly (bottom row). The reason for this behavior is that as the data grows, more points lie close to the query. Beam search therefore converges more slowly, and the true nearest neighbors face more competition for a place in the candidate list, as suggested in prior work~\cite{manohar2024parlayann}. This recall degradation is problematic for applications such as retrieval-augmented generation (RAG)~\cite{karpukhin2020dense} and recommendation systems~\cite{covington2016deep}, which require stable recall even as datasets~grow.

\subsection{The Impact of Fixing Recall}
\label{sec:query_cost}

The experiment in the previous section raises a natural question: can recall be kept stable as the dataset grows? Although prior work has not addressed this question directly, it has considered the related problem of meeting a specified recall target. The standard approach is to periodically measure recall using exact search~\cite{qdrant_recall, vespa_recall, elastic_recall} and adjust search parameters accordingly. If the measured recall falls below the target, the user increases the $ef_\text{search}$ parameter, measures recall again, and repeats this process until the target is reached~\cite{manohar2024parlayann, turbopuffer2024recall, chatzakis2025darth}.
Recent work in academia as well as in industry (e.g., Zilliz Cloud~\cite{zilliz_autoindex} and Turbopuffer~\cite{turbopuffer2024recall}) improves upon this approach by using statistical or learned models of the search effort required to achieve a recall target~\cite{chatzakis2025darth, zhang2026distribution, earlyterm2020}.
Once recall is held fixed, another core question emerges: \emph{how does query cost scale as the dataset grows?}

\Paragraph{Experimental Setup} To hold recall fixed, we set $ef_{\text{search}}$ to the smallest value that attains a given fixed recall target.
Figure~\ref{fig:hnsw_faiss_scale} reports the average number of distance computations per query across a grid of datasets, recall targets, and build settings. Rows correspond to different datasets. We progressively grow each dataset by increasing its number of points up to its original full size. For the substantially larger SIFT, DEEP, and SpaceV\footnote{This dataset contains many duplicate points. To compare fairly against other datasets, we de-duplicate it and sample the query set exclusively from the base dataset.} datasets, we cap the size at 100M points to keep the total experimental runtime to approximately one week. Recall targets of 90\%, 95\%, 97.5\%, and 99\% appear as separate curves in each subfigure. Each column corresponds to a different index configuration. HNSW occupies the left four columns and Vamana the right four, with the maximum out-degree $M\in\{16, 32, 64, 128\}$ and construction list size $ef_{\text{construction}}\in\{100, 200, 400, 800\}$ varying across the columns.\footnote{For Vamana, $\alpha$ is a build parameter fixed to the default 1.2~\cite{DiskANN}; testing $\alpha\in\{1.5,2\}$ left scalability unchanged, as out-degree typically saturates at cap $M$.} Due to limited space, we show four of the sixteen benchmarked $(M, ef_{\text{construction}})$ configurations per index; the remaining settings behave similarly.

\Paragraph{Results}
On the log-log axes in Figure~\ref{fig:hnsw_faiss_scale}, most curves are approximately linear as the dataset grows, indicating that query cost follows a \textbf{\emph{Sublinear Power Law}}:
\begin{equation}
    \label{eq:power-law}
    \text{dc} \propto N^c.
\end{equation}
Here, $0 < c < 1$, and the value at the right end of each curve reports the fitted exponent~$c$.

\begin{figure}[t]
    \centering
    \includegraphics[width=0.99\columnwidth]{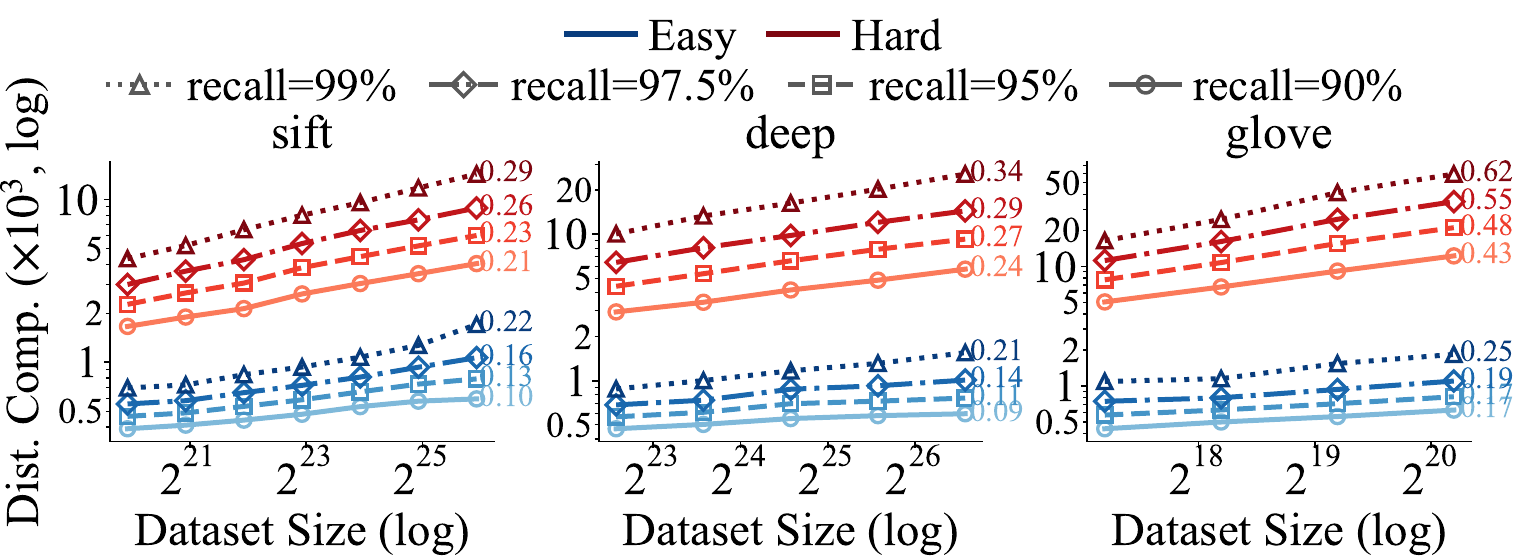}
    \Description{Log-log line plot of distance computations per query versus dataset size, with separate lines for hard and easy queries. Both sets of lines are straight and nearly parallel, indicating power-law scaling for both query groups.}
    \caption{Both hard and easy queries follow a power law scaling. We use the same HNSW setup as Figure~\ref{fig:hnsw_fixed_setup}.}
    \label{fig:hnsw_hardness}
\end{figure}

Figure~\ref{fig:hnsw_faiss_scale} also indicates the factors that impact the exponent~$c$ in Equation~\ref{eq:power-law}. The exponent increases with higher recall targets, since they leave less margin for missing true nearest neighbors and force the search process to traverse more of the graph. 
Higher values of $ef_{\text{construction}}$ and $M$ reduce $c$ by yielding graphs in which the beam search reaches the query's nearest neighbors in fewer hops.
We also measured $k \in \{1, 5, 10, 20\}$ and found that varying it does not materially change $c$, so we report only $k=10$. The nearest neighbors lie at nearly the same distance from the query, so retrieving all $k$ is about as hard as retrieving one. We give more intuition for this in Section~\ref{sec:anatomy_of_query_cost}. 

Beyond these parameters, the exponent tends to increase with dataset hardness. The rows in Figure~\ref{fig:hnsw_faiss_scale} are ordered by increasing intrinsic dimensionality, a proxy for hardness, with estimates listed in Table~\ref{table:datasets}. For example, GloVe has higher intrinsic dimensionality than SIFT (32.4 vs.\ 19.2) and a higher exponent at the same recall level. This suggests that beam-search cost scales more steeply with~$N$ on harder datasets.

\Paragraph{Varying Query Hardness} As discussed in Section~\ref{sec:background}, queries within a workload can vary substantially in hardness~\cite{wang2024steiner,chatzakis2025darth,earlyterm2020}, with queries in regions of higher intrinsic dimensionality generally being more expensive to answer~\cite{aumuller2021lid}. To test whether the Sublinear Power Law holds across query hardness levels within the same workload, we rank queries by intrinsic dimensionality\footnote{Later, in Section~\ref{sec:anatomy_of_query_cost}, we show how to measure intrinsic dimensionality using LID; Figure~\ref{fig:lid_main} reports the resulting estimates.} and rerun the experiment separately on the easiest and hardest 20\%. On SIFT, DEEP, and GloVe, the hard group's intrinsic dimensionality exceeds that of the easy group by at least $7.9$, $9.6$, and $11.0$ points, respectively. Figure~\ref{fig:hnsw_hardness} shows that while the two groups have different exponents, both still follow the Sublinear Power~Law.

\Paragraph{Summary} Our experimental results so far challenge the conventional logarithmic characterization of graph-based indexes from Section~\ref{sec:background}. Under logarithmic scaling, each doubling of~$N$ would add only a constant number of distance computations, causing the curves in Figures~\ref{fig:hnsw_faiss_scale} and~\ref{fig:hnsw_hardness} to bend downward rather than remain straight. Instead, the Sublinear Power Law describes most of the datasets, index configurations, and query hardness levels we measure.

That said, we also observe exceptions to the Sublinear Power Law. For SpaceV and OpenAI, some of the curves in Figure~\ref{fig:hnsw_faiss_scale} seem to drift very slightly below their fitted power laws, as if the scaling were slowing down. Only larger dataset sizes can tell whether the drift is real. OpenAI already uses its full base dataset, whereas SpaceV can grow by another order of magnitude; we explore this~next.

\subsection{Scaling with Larger Data}\label{sec:billion_scale}

\begin{figure}[t]
    \centering
    \includegraphics[width=1\columnwidth]{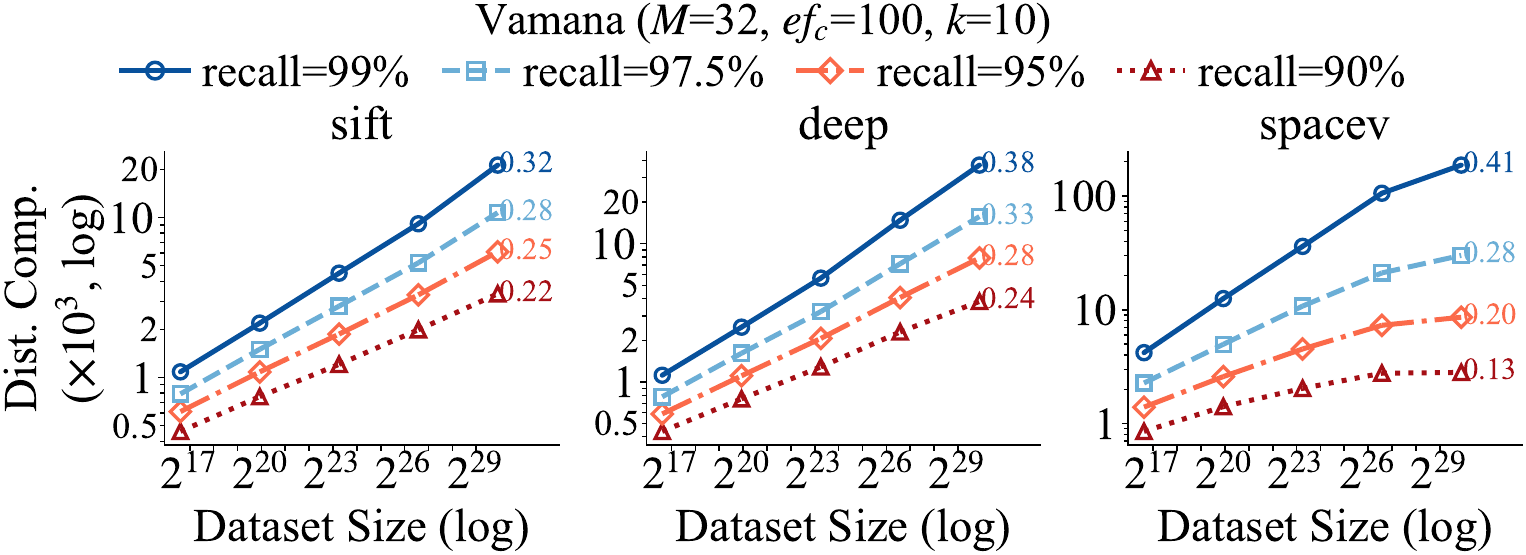}
    \Description{Log-log line plot of distance computations versus dataset size up to one billion points for the SIFT, DEEP, and SpaceV datasets. The SIFT and DEEP curves stay straight throughout, while the SpaceV curve flattens as the dataset grows.}
    \caption{The power law persists to billion scale on SIFT and DEEP, while SpaceV's slope decreases as the dataset grows.}
    \label{fig:billion-scale-test}
\end{figure}

The SIFT, DEEP, and SpaceV datasets each have a base collection size of one billion points, but at that scale the index for a single configuration occupies 300\,GB--1\,TB and takes several days to build~\cite{singh2021freshdiskann}. We therefore choose one representative configuration ($ef_{\text{construction}}=100$, $M=32$) and repeat the experiment with Vamana\footnote{We use Vamana here because it has a smaller memory footprint than HNSW and therefore lets us scale to the largest available dataset sizes.}. We grow each of the three datasets from 100K points to their full base size of one billion points.

Figure~\ref{fig:billion-scale-test} shows that on SIFT and DEEP, the same Sublinear Power Law continues to hold as the data grows. The query cost of SpaceV, however, behaves differently: it slopes downwards as the dataset grows. 
Thus, while the scalability of graph-based indexes exhibits clear empirical patterns, two questions remain: what explains the initial sublinear power-law scaling across all workloads, and why does this scaling eventually slow for some workloads?

\section{Unifying Theory}
\label{sec:why-a-power-law}

We now develop a unifying theory that explains the observations so far: when recall is held fixed, query cost initially scales as a Sublinear Power Law and then slows as the dataset grows. The theory suggests that the slowdown observed for SpaceV and OpenAI in the previous section is not specific to these datasets; we would observe the same transition on the other datasets if they could be scaled to sufficiently large sizes.

\subsection{Anatomy of Query Cost}
\label{sec:anatomy_of_query_cost}
\Paragraph{Search Phases}
Conceptually, beam search on graph-based indexes such as HNSW or Vamana involves two phases. The first, which we call the \emph{shortcut phase}, navigates from the entry point toward the query's neighborhood: at each step, the search picks the candidate closest to the query and checks its neighbors, moving steadily inward~\cite{HNSW, wang2021comprehensive}. In this phase, beam search behaves essentially like greedy search, since the best candidate improves at every step. The second, which we call the \emph{exploration phase}, searches the neighborhood of the query for its true $k$-nearest neighbors: the search now moves outward, expanding candidates it had passed over and branching into paths it initially skipped, until it can no longer make useful headway. We now study the costs of these two phases.

\Paragraph{Exploration Phase} Across all benchmarks in Figures~\ref{fig:hnsw_faiss_scale}--\ref{fig:billion-scale-test}, we measure the proportion of search cost spent in the shortcut phase versus the exploration phase. In every case, the exploration phase accounts for at least 80\% of the cost, consistent with prior observations~\cite{chatzakis2025darth, lu2026terminus, chen2023finger, lin2019graph}. This phase is expensive because reaching the query vicinity does not imply that all true nearest neighbors are easily reachable. In high dimensions, distances concentrate~\cite{song2025trim}, so points close to the same query need not be close to one another. At the same time, bounded node degree limits local connectivity to control memory and per-hop search cost. Together, these effects prevent the query neighborhood from forming a tightly connected region~\cite{diwan2024navigable,cai2013distributions}, forcing nearby points to be reached through roundabout rather than direct paths and requiring exploration beyond the true nearest neighbors to achieve a recall target.

\Paragraph{Modeling Exploration Cost} We now construct a mathematical model of the cost of the exploration phase, aiming to understand why the Sublinear Power Law arises. The model represents the query's neighborhood as two nested hyper-balls (multi-dimensional balls) centered at the query point $q$. The \emph{inner hyper-ball} extends from the query to its $k$-th nearest neighbor and thus has radius $r_k$; the \emph{outer hyper-ball} extends to distance $(1+\varepsilon)r_k$. Here, $1+\varepsilon$ is the smallest factor for which extensively searching the outer hyper-ball visits the true $k$-nearest neighbors with probability equal to the recall target. As the dataset grows, the distances from the query to the points in its neighborhood shrink by a common factor. The inner and outer hyper-balls therefore contract by the same factor, leaving their radii ratio (i.e., $1+\varepsilon$) unchanged (we return to this point later in Discussion~\ref{disc:eps}). 
Henceforth, we define the exploration phase precisely as the distance computations the search performs within the outer hyper-ball, and the shortcut phase as the rest.
The cost of the exploration phase is governed by the number of points inside the outer hyper-ball. Hence, the key question is: \emph{how does the number of points in the outer hyper-ball grow with the dataset size~$N$?}

\Paragraph{Local Uniformity on a Manifold} To answer the question above, we adopt a standard assumption from the literature: real embedding datasets exhibit complex structure at the global scale, yet if we zoom closely enough into any subspace, the distribution within it is approximately uniform~\cite{prokhorenkova2020graph,aumuller2021lid}. More precisely, real-world datasets tend to concentrate across lower-dimensional manifolds. Within a sufficiently small neighborhood around the query, the data behaves as if its density were approximately uniform~\cite{mcinnes2018umap,houle2017local,ma2025graph}. The intuition is that real-world data items vary along continuous attributes, such as the visual style of an image or the topical blend of a document. Embedding models are trained to map similar data items to nearby vectors, so the variation carries over to the vector space, where density changes gradually rather than abruptly. Within a small enough neighborhood, points can therefore be treated as uniform on their manifold. For example, in Figure~\ref{fig:sphere-caps}, the dataset forms a complex manifold in three dimensions, yet the subspace we zoom into is two-dimensional with approximately uniform points. We thus model the points in both the inner hyper-ball, which contains the nearest neighbors, and the outer hyper-ball as uniformly distributed around the query; we call this the \emph{local uniformity model}. 
Figure~\ref{fig:visited_node_distr} later confirms that the outer hyper-ball is indeed local: the search rarely goes far beyond the nearest neighbors.

\Paragraph{Assumptions} Beyond the local uniformity model, our analysis in this section adopts three assumptions. (1)~The query is drawn from the same distribution as the data. (2)~The dataset is dense enough that the hyper-balls' projections onto the manifold are themselves approximately lower-dimensional hyper-balls. (3)~The distance metric is Euclidean. Discussion~\ref{disc:assumptions} revisits what happens when each of these assumptions is dropped.

\begin{figure}[t]
    \centering
    \includegraphics[width=1\columnwidth]{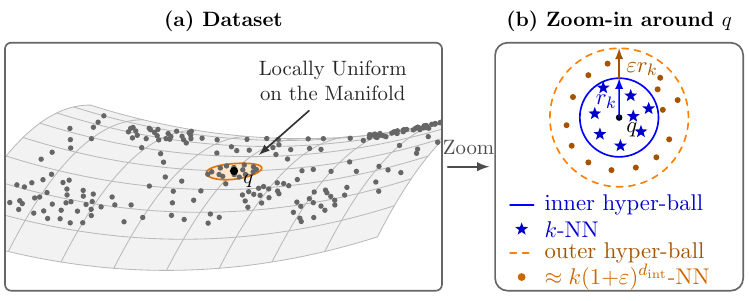}
    \Description{Two schematic subfigures. Subfigure (a) shows a curved manifold in the ambient space with a query point q, where the small neighborhood around q appears uniformly sampled even though the whole dataset is not. Subfigure (b) shows two concentric balls around q: an inner ball of radius r_k containing the k nearest neighbors, and a slightly larger outer ball obtained by expanding that radius by a factor of one plus epsilon.}
    \caption{(a) The data is not globally uniform, but the query $q$'s neighborhood is roughly uniform on the manifold. (b)~The $k$-NNs of $q$ lie within radius $r_k$; expanding this radius by a factor $(1+\varepsilon)$ captures about $k(1+\varepsilon)^{d_{\mathrm{int}}}$ points.}
    \label{fig:sphere-caps}
\end{figure}

\Paragraph{Dimensionality and Ball Volume}
Under the above assumptions, the cost of the exploration phase is proportional to the number of points inside the outer hyper-ball (we revisit this point in Section~\ref{sec:theory-power-law}).\footnote{Similar measures have been used to estimate query hardness~\cite{zoumpatianos2018generating, ceccarello2025evaluating}.} The number of such points is proportional to the outer hyper-ball's volume on the manifold. The volume, in turn, is governed by the manifold's dimensionality, i.e., the intrinsic dimensionality near the query point ($d_{\mathrm{int}}$), rather than by the ambient dimensionality of the embedding space ($d_{\mathrm{amb}}$). The following theorem counts the points inside the outer hyper-ball in terms of its volume on the manifold.

\begin{theorem}[\textsc{Scaling with Intrinsic Dimensionality}]
    \label{thm:local-hyperball-scaling}
Let $B(q,r)$ be the hyper-ball of radius $r$ around query $q$ on the manifold, and let $C(r)$ be the number of data points inside $B(q,r)$. For any radius $r$ and any $\varepsilon$ such that the local uniformity model holds within $B(q,(1+\varepsilon)r)$,
\begin{equation*}
    C((1+\varepsilon)r) \approx (1+\varepsilon)^{d_{\mathrm{int}}}\, C(r).
\end{equation*}
    In particular, if $r_k$ is the query's distance to its $k$-th nearest neighbor, then the expected number of points inside the outer hyper-ball is
    \begin{equation*}
        C((1+\varepsilon)r_k)
        \approx
        k(1+\varepsilon)^{d_{\mathrm{int}}}.
    \end{equation*}
\end{theorem}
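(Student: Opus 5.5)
The argument reduces the count of points to a volume ratio, then evaluates that ratio using the scaling of ball volume on a $d_{\mathrm{int}}$-dimensional manifold. Under the local uniformity model, the data points inside $B(q,(1+\varepsilon)r)$ are distributed with an approximately constant density $\rho$ with respect to the $d_{\mathrm{int}}$-dimensional volume measure $\mathrm{vol}$ on the manifold. We would model this as a Poisson process, or as $N$ i.i.d.\ samples whose density is approximately constant near $q$. Then for every $s \le (1+\varepsilon)r$ the expected count is
\begin{equation*}
\mathbb{E}[C(s)] \approx \rho \cdot \mathrm{vol}\bigl(B(q,s)\bigr).
\end{equation*}
The key point is that a single $\rho$ serves both radii $r$ and $(1+\varepsilon)r$. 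This is exactly why the hypothesis requires uniformity on the larger ball, not just the smaller one.

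Next we use Assumption~(2): the projection of $B(q,s)$ onto the manifold is approximately a Euclidean $d_{\mathrm{int}}$-dimensional ball. Its volume is therefore $\omega_{d_{\mathrm{int}}} s^{d_{\mathrm{int}}}$, where $\omega_{d}$ is the volume of the unit $d$-ball. If we want to be more careful, we can invoke the standard expansion for geodesic balls on a Riemannian manifold, $\mathrm{vol}(B(q,s)) = \omega_{d} s^{d}\bigl(1 - \tfrac{\mathrm{Scal}(q)}{6(d+2)} s^2 + O(s^4)\bigr)$. The curvature correction vanishes as $s \to 0$, and Assumption~(3) lets us identify ambient Euclidean distance with intrinsic distance to leading order at small scales. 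Dividing the two volumes cancels both $\rho$ and $\omega_{d_{\mathrm{int}}}$, which gives $C((1+\varepsilon)r)/C(r) \approx (1+\varepsilon)^{d_{\mathrm{int}}}$ in expectation. This is the first claim.

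For the second claim, we set $r = r_k$. By definition of the $k$-th nearest neighbor distance, $C(r_k) = k$, ignoring ties, which occur with probability zero under a continuous density. This cannot be plugged in directly, because $r_k$ is itself random and depends on the sample. We would handle this by conditioning on $r_k$. Given that the $k$-th nearest neighbor lies at distance $r_k$, the points beyond the $k$ nearest are conditionally i.i.d.\ on the region outside $B(q,r_k)$, with the same local density. The expected number of points in the annulus between radii $r_k$ and $(1+\varepsilon)r_k$ is then $\rho\,\omega_{d_{\mathrm{int}}} r_k^{d_{\mathrm{int}}}\bigl((1+\varepsilon)^{d_{\mathrm{int}}} - 1\bigr)$. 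Combining this with $\rho\,\omega_{d_{\mathrm{int}}}\mathbb{E}[r_k^{d_{\mathrm{int}}}] \approx k$, which is the Gamma/Erlang law of the $k$-th arrival in a Poisson process, gives a total of approximately $k + k\bigl((1+\varepsilon)^{d_{\mathrm{int}}} - 1\bigr) = k(1+\varepsilon)^{d_{\mathrm{int}}}$.

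The main obstacle is making the "$\approx$" honest. The approximation hides three separate errors: the deviation of the density from constant across the outer ball, the curvature term in the volume of the ball, and finite-sample effects when conditioning on $r_k$, where the exact i.i.d.\ computation gives $(N-k)$ rather than $N$ factors. My plan is to state each as a multiplicative error of the form $1 + O(\cdot)$ in the appropriate small quantity and argue that each vanishes as $r_k \to 0$, that is, as $N$ grows. I would not attempt uniform bounds beyond this.
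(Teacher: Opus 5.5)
Your argument for the first claim is the paper's own. You use one density valid on the larger ball, take volume $\propto s^{d_{\mathrm{int}}}$ from Assumption~(2), and let the density cancel in the ratio. The Riemannian curvature expansion is an optional refinement the paper does not use.

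You genuinely differ on the second claim. The paper sets $r=r_k$ and substitutes $C(r_k)\approx k$ into the ratio, which implicitly treats $r_k$ as a deterministic radius. You notice that the first claim concerns expected counts at a fixed radius while $r_k$ depends on the sample. So you split the outer ball into the $k$ inner points plus an annulus and condition on $r_k$. This properly justifies the words ``expected number'' in the statement. For a homogeneous Poisson process, $\rho\,\omega_{d_{\mathrm{int}}} r_k^{d_{\mathrm{int}}}\sim\mathrm{Gamma}(k,1)$, and your computation gives $\mathbb{E}[C((1+\varepsilon)r_k)]=k(1+\varepsilon)^{d_{\mathrm{int}}}$ exactly.

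The finite-sample error you list also disappears. Take $N$ i.i.d.\ points with ball mass $F(s)\propto s^{d_{\mathrm{int}}}$ out to the relevant radius. Then $F(r_k)\sim\mathrm{Beta}(k,N-k+1)$ and $\mathbb{E}[F(r_k)/(1-F(r_k))]=k/(N-k)$. The $(N-k)$ factor therefore cancels, and the expected annulus count is again exactly $k\bigl((1+\varepsilon)^{d_{\mathrm{int}}}-1\bigr)$. The paper's one-line substitution is adequate at the ``$\approx$'' level of the statement. Your route shows that substitution is actually correct, not merely plausible, and leaves only non-constant density and curvature as genuine approximations.

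One small attribution fix: identifying chordal with geodesic distance at small scales comes from smoothness of the manifold plus Assumption~(2). It does not come from Assumption~(3), which only says the metric is Euclidean.
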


\begin{proof}
    A $d_{\mathrm{int}}$-dimensional hyper-ball has volume proportional to $r^{d_{\mathrm{int}}}$~\cite{ball1997elementary}.
    By assumption, the local uniformity model holds within $B(q,(1+\varepsilon)r)$, and $B(q,r)$ is contained in it, so the expected number of points in each of the two hyper-balls scales with its volume on the manifold (Figure~\ref{fig:sphere-caps}-b). Specifically, we have
    \begin{equation*}
        \frac{C((1+\varepsilon)r)}{C(r)}
        \approx
        \frac{((1+\varepsilon)r)^{d_{\mathrm{int}}}}{r^{d_{\mathrm{int}}}}
        =
        (1+\varepsilon)^{d_{\mathrm{int}}}.
    \end{equation*}
    Setting $r=r_k$ gives $C(r_k)\approx k$, so
    $C((1+\varepsilon)r_k)
    \approx
    k(1+\varepsilon)^{d_{\mathrm{int}}}.$
\end{proof}

The base $1+\varepsilon$ can be rewritten as $N^{\log(1+\varepsilon)/\log N}$. Substituting this into the factor $(1+\varepsilon)^{d_{\mathrm{int}}}$ and simplifying the exponent yields the following:
\begin{corollary}[\textsc{Scaling with Dataset Size}]
    \label{cor:local-hyperball-scaling-n}
    Under the assumptions of Theorem~\ref{thm:local-hyperball-scaling}, the expected number of data points inside the outer hyper-ball around the query can be written as
    \begin{equation*}
        C((1+\varepsilon)r_k)
        \approx
        kN^{\frac{d_{\mathrm{int}}}{\log N}\log(1+\varepsilon)}.
    \end{equation*}
\end{corollary}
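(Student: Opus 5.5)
The corollary is an algebraic rewriting of the second conclusion of Theorem~\ref{thm:local-hyperball-scaling}, so the plan is to start from
\begin{equation*}
    C((1+\varepsilon)r_k) \approx k(1+\varepsilon)^{d_{\mathrm{int}}},
\end{equation*}
which holds under exactly the assumptions we are allowed to carry over, and transform only the factor $(1+\varepsilon)^{d_{\mathrm{int}}}$. No new probabilistic or geometric argument is needed. The approximation sign is inherited unchanged, because every subsequent step is an exact identity.

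The key step is to express the base $1+\varepsilon$ as a power of $N$. For $N>1$ we have $\log N>0$, so $x = N^{\log x/\log N}$ holds for any $x>0$. With $x = 1+\varepsilon$, which is positive since $\varepsilon>0$, this gives $1+\varepsilon = N^{\log(1+\varepsilon)/\log N}$. The identity is independent of the logarithm's base, because the ratio $\log x/\log N$ is base-invariant. I would state this explicitly so the reader does not worry about which $\log$ is meant.

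Next, I raise both sides to the power $d_{\mathrm{int}}$ and combine the exponents:
\begin{equation*}
    (1+\varepsilon)^{d_{\mathrm{int}}} = \left(N^{\log(1+\varepsilon)/\log N}\right)^{d_{\mathrm{int}}} = N^{\frac{d_{\mathrm{int}}}{\log N}\log(1+\varepsilon)}.
\end{equation*}
Multiplying by $k$ then yields the claimed expression.

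The only point needing care, though not a real obstacle, is the degenerate case $N=1$ (or $N\le 1$), where $\log N\le 0$ and the rewriting is undefined. Here I would simply note that the statement is meant for $N\ge 2$, and in any case $N>k$ is needed for $r_k$ to exist. I would also remark, without proving anything further, that the corollary makes no claim that the exponent $\frac{d_{\mathrm{int}}}{\log N}\log(1+\varepsilon)$ is constant in $N$. It becomes a genuine power law only when $d_{\mathrm{int}}$ grows proportionally to $\log N$ and $\varepsilon$ stays fixed, which the paper addresses later.
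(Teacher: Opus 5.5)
Your proposal is correct and follows essentially the same route as the paper: rewrite $1+\varepsilon$ as $N^{\log(1+\varepsilon)/\log N}$, substitute it into the factor $(1+\varepsilon)^{d_{\mathrm{int}}}$ from Theorem~\ref{thm:local-hyperball-scaling}, and multiply the exponents. Your added remarks on base-invariance of the ratio of logarithms and on excluding $N\le 1$ are harmless refinements that the paper leaves implicit.
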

The exponent of~$N$ in Corollary~\ref{cor:local-hyperball-scaling-n} depends on $d_{\mathrm{int}}$ through the ratio $d_{\mathrm{int}}/\log N$, so applying it requires understanding how $d_{\mathrm{int}}$ behaves as~$N$ grows. 

\newcommand{\axisrule}{\leaders\hrule height 2.15pt depth -1.35pt\hfill}
\newcommand{\axisfill}{\textcolor{black!70}{\axisrule}}
\newcommand{\axisend}{\textcolor{black!70}{\axisrule\kern-3pt$\boldsymbol{\succ}$}}
\begin{table}[t]
\centering
\scriptsize
\caption{The order of the query cost falls as $N$ grows.}
\label{tab:regime-costs-uniform}
\renewcommand{\arraystretch}{1.2}
\setlength{\tabcolsep}{2pt}
\begin{threeparttable}
\begin{tabular}{@{}lcccccccc@{}}
\toprule
& \multicolumn{3}{c}{\textsc{Sparse}: $d_{\mathrm{int}} = \Theta(\log N)$}
&
& \multicolumn{3}{c@{}}{\textsc{Dense}: $d_{\mathrm{int}} = o(\log N)$} \\
\cmidrule(lr){2-4} \cmidrule(l){6-8}
$N\colon$
& \axisfill & $O(M)$\tnote{$\ast$}
& \axisfill & $2^{\Theta(d_{\mathrm{int}})}$
& \axisfill & $2^{2^{\Theta(d_{\mathrm{int}})}}$
& \axisend \\
\midrule
Shortcut     & $O(1)$      & & $N^{\Theta(1)}$ & & $N^{o(1)}$ & & $\Theta(\text{poly}\log N)$\tnote{$\dagger$} \\
Exploration  & $\Theta(N)$ & & $N^{\Theta(1)}$ & & $N^{o(1)}$ & & $O(\text{poly}\log N)$ \\
\midrule
Total        & $\Theta(N)$ & & $N^{\Theta(1)}$ & & $N^{o(1)}$ & & $\Theta(\text{poly}\log N)$ \\
\bottomrule
\end{tabular}
\begin{tablenotes}[para,flushleft]
\item[$\ast$] For $M = 2^{o(d_{\mathrm{int}})}$.
\item[$\dagger$] For some graph-based indexes, these expressions are lower bounds on the cost that may not be tight. For example, as argued in Section~\ref{sec:theory-dense}, SNG's cost is at least~$N^{(1-o(1))/d_{\mathrm{int}}}$, which is larger than the poly-logarithmic term in the right column. Nevertheless, the table's expressions are tight for the exactly constructed version of Vamana, intuitively because it behaves like a skip list.
\end{tablenotes}
\end{threeparttable}
\end{table}

\Paragraph{LID} We use the local intrinsic dimensionality (LID) as our empirical proxy for~$d_{\mathrm{int}}$, as it is computationally feasible and standard in practice~\cite{houle2017local,amsaleg2015lid,aumuller2021lid}. 
LID measures intrinsic dimensionality by how tightly the query's distances to its nearest neighbors concentrate. The more these distances concentrate around a common value, the higher the intrinsic dimensionality. We estimate LID using the maximum-likelihood estimator~\cite{levina2004maximum, aumuller2021lid}
\[
\widehat{\operatorname{LID}}_k(q)
=
-\left(
\frac{1}{k}\sum_{i=1}^{k}\log \frac{r_i}{r_k}
\right)^{-1}.
\]
Here, $r_i$ is the query's distance to its $i$-th nearest neighbor. 

Figure~\ref{fig:lid_main} reports how LID changes with~$N$ across datasets and values of~$k$. For most datasets, LID follows an approximately straight line when plotted against a logarithmic dataset size, meaning LID grows in step with $\log N$. 
Meanwhile, the LID of OpenAI still increases, but with a gradually decreasing slope. The decreasing slope matches Figure~\ref{fig:hnsw_faiss_scale}, where some of OpenAI's curves deviate slightly from the Sublinear Power Law. SpaceV departs from the linear trend even more significantly than OpenAI: its LID rises at first but then almost flattens. This behavior of SpaceV is consistent with Figure~\ref{fig:billion-scale-test}, where it deviates from the Sublinear Power Law.

\begin{figure}[t]
    \centering
    \includegraphics[width=0.97\columnwidth]{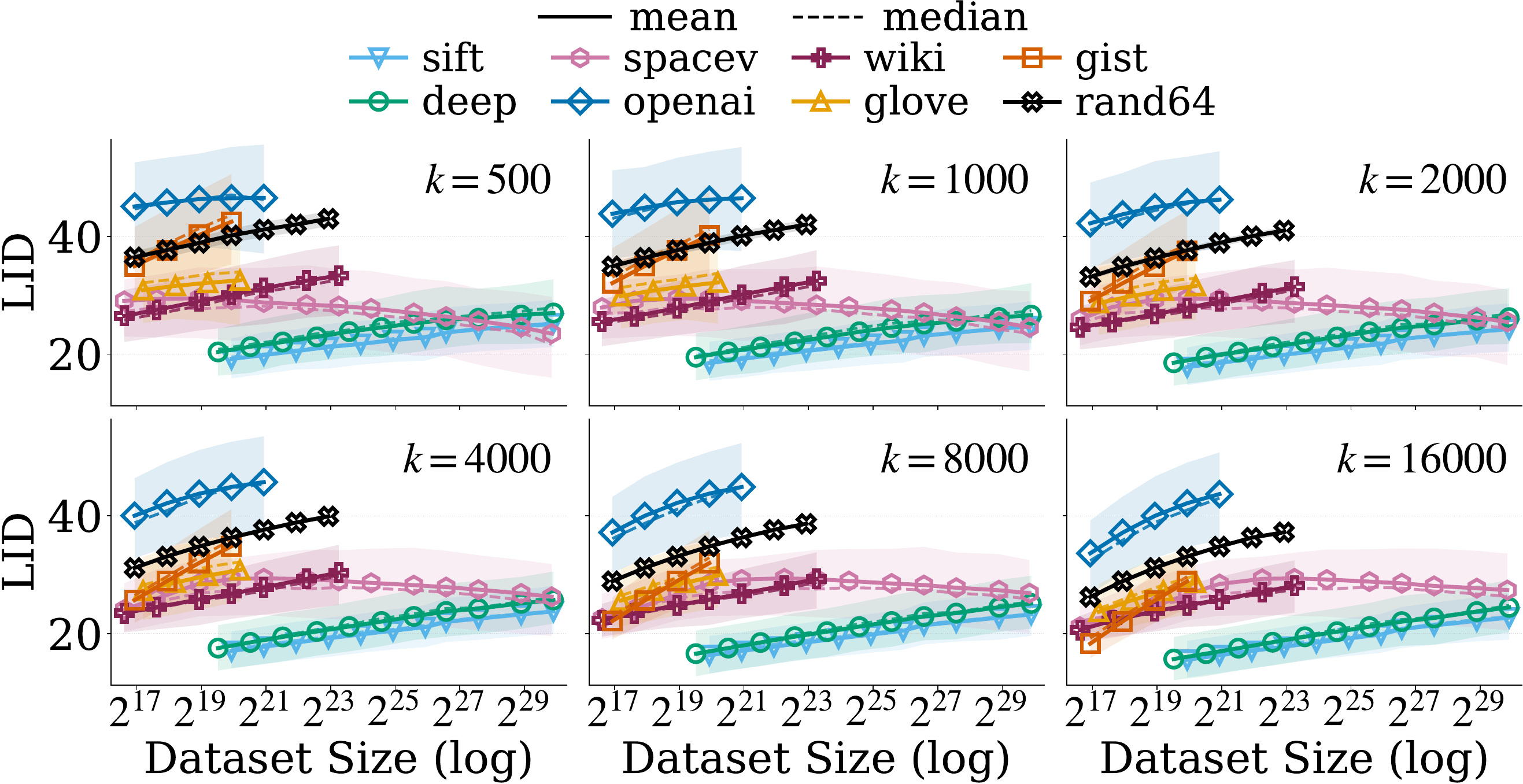}
    \Description{Line chart of median local intrinsic dimensionality versus dataset size on a logarithmic axis, one curve per dataset, each with a shaded band marking the 25th to 75th percentiles. Most curves rise roughly linearly with the logarithm of dataset size, while two curves flatten at larger sizes.}
    \caption{%
LID grows roughly with $\log N$, as the sparse regime predicts. Two datasets deviate and move toward the dense regime. LID is measured over 500 random queries and six values of $k$; shaded bands mark the 25th--75th percentiles.
}
    \label{fig:lid_main}
\end{figure}

\Paragraph{Sparse and Dense Regimes} 
In principle, $d_{\mathrm{int}}$ can take any value from $1$ to $O(\log N)$. The upper bound is motivated by the Johnson-Lindenstrauss (JL) lemma~\cite{JLREF84,dasgupta2003elementary}. The lemma states that any~$N$ points in Euclidean space can be embedded into a space with an ambient dimensionality of $O(\log N)$ while changing every pairwise distance by at most a constant factor. Since $d_{\mathrm{int}}$ never exceeds $d_{\mathrm{amb}}$~\cite{dasgupta2008random} and a constant-factor change in pairwise distances changes $d_{\mathrm{int}}$ by at most a constant factor~\cite{heinonen2001lectures, assouad1983plongements}, the same bound holds for the original dataset.\footnote{This holds even without the JL embedding: for any set of~$N$ points, every ball is covered by~$N$ half-radius balls centered at the points, so the doubling dimension, a standard notion of intrinsic dimensionality~\cite{gupta2003bounded}, is $O(\log N)$ (Theorem~\ref{thm:lg-scale-ddim}).} 
Motivated by this upper bound on $d_{\mathrm{int}}$, and guided by the LID curves in Figure~\ref{fig:lid_main}, we define two regimes of interest. In the \emph{sparse regime}, $d_{\mathrm{int}}$ grows proportionally to $\log N$, i.e., $d_{\mathrm{int}}=\Theta(\log N)$. In the \emph{dense regime}, $d_{\mathrm{int}}$ grows more slowly than $\log N$, i.e., $d_{\mathrm{int}}=o(\log N)$.
Later, in Discussion~\ref{disc:sparse-or-dense}, we further substantiate these two regimes and analyze the transition between the two. We now apply Corollary~\ref{cor:local-hyperball-scaling-n} to analyze the search cost in each regime as~$N$ grows.

\Paragraph{Sparse Regime: Power-Law}
In the sparse regime, $d_{\mathrm{int}}=\Theta(\log N)$, so $d_{\mathrm{int}}=(\log N)/\rho$ for a constant $\rho=\Theta(1)$. Plugging this into Corollary~\ref{cor:local-hyperball-scaling-n}, the expected number of points in the outer hyper-ball~is
\[
C((1+\varepsilon)r_k)
\approx
kN^{\frac{d_{\mathrm{int}}}{\log N}\log(1+\varepsilon)}
=
kN^{\log(1+\varepsilon)/\rho}
\propto
N^{c}.
\]
Notably, $c=\log(1+\varepsilon)/\rho$ is a constant independent of~$N$, which explains the power-law scaling of search cost in the sparse regime.

\Paragraph{Dense Regime: Subpolynomial}
In the dense regime, $d_{\mathrm{int}}=o(\log N)$. Plugging this into Corollary~\ref{cor:local-hyperball-scaling-n}, the expected number of points in the outer hyper-ball is
\[
C((1+\varepsilon)r_k)
\approx
kN^{\frac{d_{\mathrm{int}}}{\log N}\log(1+\varepsilon)}
=
kN^{o(1)}.
\]
The exponent $\frac{d_{\mathrm{int}}}{\log N}\log(1+\varepsilon)$ converges to 0 as $N\to\infty$, so the number of points in the outer hyper-ball grows slower than any power of~$N$, resulting in subpolynomial scaling.

\begin{figure}[t]
    \centering
    \includegraphics[width=1\columnwidth]{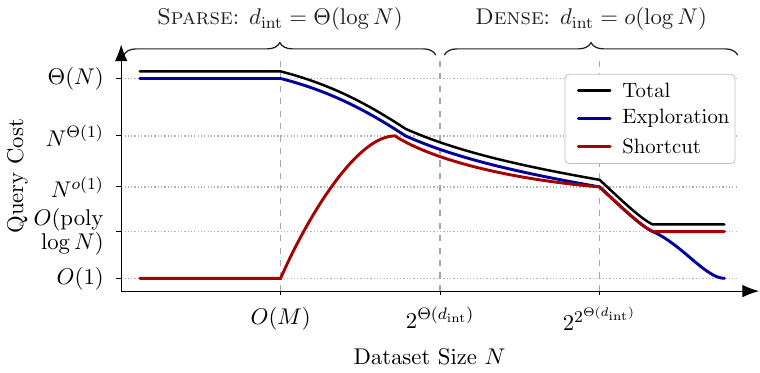}
    \Description{Schematic plot of query cost versus dataset size, partitioned into successive regimes. The growth order of the cost decreases from linear at small sizes, through Sublinear Power Laws, to poly-logarithmic at very large sizes.}
    \caption{The total query cost falls from $\Theta(N)$ to $\Theta(\text{poly}\log N)$ as $N$ crosses the regimes of Table~\ref{tab:regime-costs-uniform}.}
    \label{fig:cost}
\end{figure}

\Paragraph{Shortcut Phase} Although the exploration phase empirically accounts for most of the search cost, we analyze the shortcut phase to complete the unifying theory. Unlike the exploration phase, the shortcut phase stretches beyond the neighborhood of the query. Its cost therefore depends also on the global geometry of the dataset. Because of that, the shortcut phase's cost analysis assumes the data is uniformly distributed on a sphere. Our analysis is presented in Table~\ref{tab:regime-costs-uniform}'s first row and the formal proof of our arguments resides in Section~\ref{sec:theory-power-law}. When the dataset has fewer points than the graph out-degree ($N = O(M)$), the outer hyper-ball contains the entire dataset, so the search skips straight to the exploration phase and the shortcut phase costs $O(1)$. When the dataset is larger than the graph out-degree but has not yet saturated the underlying manifold (we later establish the saturation point is $2^{2^{\Theta(d_{\mathrm{int}})}}$), the cost of the shortcut phase grows no faster than that of the exploration phase, as seen in the table. Once the dataset saturates the manifold, the shortcut phase costs $\Theta(\text{poly}\log N)$, matching the guarantees established for fixed dimensionality~\cite{DiskANN,lin2019graph,prokhorenkova2020graph}.

\Paragraph{Unifying Theory} We now unify the two search phases into a single account of total search cost for data uniformly distributed on a sphere, summarized in Table~\ref{tab:regime-costs-uniform} and visualized in Figure~\ref{fig:cost}. While $N = O(M)$, the outer hyper-ball contains the entire dataset and the search cost is $\Theta(N)$ (Corollary~\ref{cor:local-hyperball-scaling-n}). As the dataset grows beyond $O(M)$ points but stays at most exponential in the intrinsic dimensionality ($N = 2^{O(d_{\mathrm{int}})}$), the cost follows a power law $N^c$ for $0 < c < 1$. Once the dataset exceeds $2^{O(d_{\mathrm{int}})}$ points, the cost becomes $N^{o(1)}$. Beyond $2^{2^{\Theta(d_{\mathrm{int}})}}$ points, the outer hyper-ball contains only $O(\text{poly}\log N)$ points, the shortcut phase therefore governs the scaling, and the cost is $\Theta(\text{poly}\log N)$.

\DiscussionPara{disc:sparse-or-dense}{Doubling Dimension}
The literature offers several estimators for~$d_{\mathrm{int}}$, including the doubling dimension, the expansion dimension, and the local intrinsic dimensionality~\cite{houle2017local,aumuller2021lid,assouad1983plongements}.\footnote{These measures differ in general, but for smooth data distributions they often coincide~\cite{houle2017local}.} We have already used the local intrinsic dimensionality to identify which regime the datasets of Table~\ref{table:datasets} fall into (Figure~\ref{fig:lid_main}). We now turn to the doubling dimension, prevalent in theoretical work~\cite{indyk2023worst, gupta2003bounded}. Unlike LID, which models the dataset as i.i.d.\ draws from an underlying smooth distribution, the doubling dimension is defined for any dataset as given, so guarantees proven in terms of it hold for a broader class of inputs. 

Consider a hyper-ball centered at any dataset point with any radius~$R$. Find the smallest number of non-empty hyper-balls of radius~$R/2$ that can cover all the data points within this hyper-ball. The largest such value over all dataset points and radii~$R$ is the \emph{doubling constant}~$\lambda$, and the doubling dimension is defined as $\log_2\lambda$. Computing the value of $\lambda$ exactly is NP-hard~\cite{gottlieb2013proximity}, so we use the doubling dimension only in our theoretical analysis. Theorem~\ref{thm:lg-scale-ddim} places the data in the sparse regime whenever~$N$ is at most exponential in the dimensionality of the underlying manifold.

\begin{figure}[t]
    \centering
    \includegraphics[width=0.99\columnwidth]{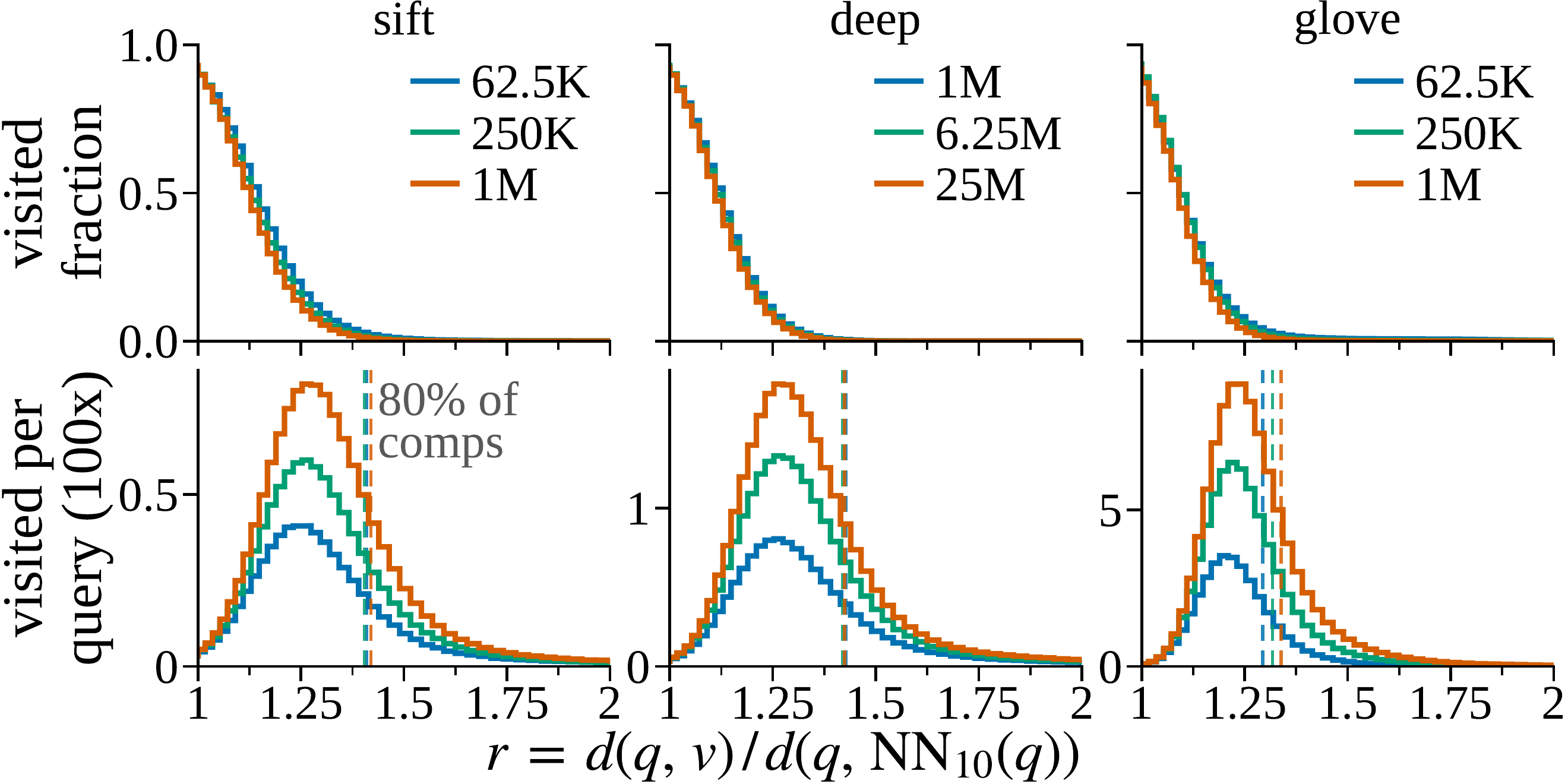}
    \Description{Two rows of histograms of normalized visited-node distances for three datasets, with one curve per dataset size. Top row: the fraction of dataset points visited per distance bucket, with the curves for different sizes nearly coinciding. Bottom row: the absolute number of distance computations per bucket, concentrated just beyond normalized distance one, with a dashed vertical line marking the distance below which 80 percent of the computations occur.}
    \caption{%
Visited-node distance distributions for HNSW across three datasets and three dataset sizes at recall $=~0.95$ ($M=32$, $efc=200$, $k=10$). Distances are normalized by the 10th-NN distance. Dashed lines indicate the normalized distance containing 80\% of distance computations.
    }
    \label{fig:visited_node_distr}
\end{figure}

\begin{theorem}[\textsc{Logarithmic Scaling of the Doubling Dimension}]
    \label{thm:lg-scale-ddim}
    Let $P$ consist of $N=2^{O(d)}$ points drawn i.i.d.\ uniformly from $\mathbb{S}^{d-1}$. With high probability, the doubling dimension of $P$ is $\Theta(\log N)$.
\end{theorem}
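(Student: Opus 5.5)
The proof splits into an upper bound and a lower bound. The upper bound is deterministic and holds for any $N$-point set. Fix a data point $p$ and a radius $R$. The points of $P$ inside $B(p,R)$ are at most $N$ in number. Each of them is covered by the ball of radius $R/2$ centred at itself, and these balls are non-empty. Hence the doubling constant satisfies $\lambda\le N$, so the doubling dimension is at most $\log_2 N = O(\log N)$. This is exactly the argument sketched in the footnote preceding the theorem.

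The lower bound needs a single witness: one ball $B(p,R)$ whose points cannot be covered by fewer than $N^{\Omega(1)}$ balls of radius $R/2$. I will take $R=2$, so that the ball contains all of $P$; any radius at least the diameter of the sphere works. I then need to show that, with high probability, $P$ is \emph{well separated}: every pair of distinct points is at Euclidean distance greater than $1$. Under that event, a ball of radius $1$ contains at most one data point of $P$. Covering $P$ therefore requires $N$ half-radius balls, giving $\lambda\ge N$ and doubling dimension at least $\log_2 N$. This is even sharper than $\Omega(\log N)$ and combines with the upper bound to give $\Theta(\log N)$.

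The main step is the separation estimate. For two independent uniform points $x,y\in\mathbb{S}^{d-1}$, we have $\|x-y\|\le 1$ if and only if $\langle x,y\rangle\ge 1/2$. The measure of the cap $\{y:\langle x,y\rangle\ge 1/2\}$ is at most $e^{-d/8}$ by the standard concentration bound $\Pr[\langle x,y\rangle\ge t]\le e^{-dt^2/2}$. A union bound over the $\binom N2$ pairs gives failure probability at most $N^2e^{-d/8}$.

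This is where I expect the obstacle. The hypothesis $N=2^{O(d)}$ must be read with a small enough hidden constant, namely $N\le 2^{c d}$ with $c<1/(16\ln 2)$, for the union bound to yield $N^2e^{-d/8}=e^{-\Omega(d)}=o(1)$. If the constant is larger, I would instead adjust the radius. With $R=2$ and half-radius $1$ fixed, I can weaken the requirement: rather than demanding that no pair be close, I only need every radius-$1$ ball to contain few points. The first fallback is to bound the maximum number of points in any cap of angular radius $\pi/3$ around a data point by $N e^{-d/8}+O(\log N)$, using a Chernoff bound plus a union bound over the $N$ data centres. A covering ball need not be centred at a data point, but the triangle inequality lets me compare it to a cap around a data point at twice the radius. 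With $N=2^{O(d)}$ this maximal occupancy is $N^{1-\Omega(1)}$. Dividing $N$ by it then shows that at least $N^{\Omega(1)}$ balls are needed, which again gives doubling dimension $\Omega(\log N)$ with high probability.
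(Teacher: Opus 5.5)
Your key step, ``under that event, a ball of radius $1$ contains at most one data point of $P$,'' holds only for balls centred at data points. The paper's definition requires only that the half-radius covering balls be non-empty. Your own fallback also concedes that a covering ball need not be centred at a data point.

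With unrestricted centres, the witness $B(p,2)$ certifies nothing. The ball of radius $1$ about the origin contains all of $\mathbb{S}^{d-1}$, so a single ball covers $P$. Even a radius-$1$ ball centred on the sphere meets it in a cap of angular radius $\pi/3$, whose points can be $\sqrt{3}$ apart. So pairwise separation greater than $1$ does not isolate points.

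The fallback does not repair this. Re-centring a radius-$1$ covering ball at a data point $p$ it contains gives $B(p,2)$, the whole sphere. At best, in angular terms, it gives the cap $\{y:\langle p,y\rangle\ge -1/2\}$, of measure $1-e^{-\Omega(d)}$. The occupancy you must bound is therefore essentially $N$. Your $Ne^{-d/8}$ estimate is for $\pi/3$-caps, which are the wrong sets after the comparison, and the division yields nothing beyond $\lambda\ge 1$. At the top scale $R=2$ this cannot be fixed. You must descend to a scale where the covering balls' \emph{diameter} is below the separation, so that centres no longer matter.

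That is how the paper proceeds:
\begin{itemize}
\item It takes $\varepsilon=2^{-(c+1)}$ and uses Markov to show that, with high probability, fewer than $N/2$ pairs lie within distance $\varepsilon$.
\item It deletes one endpoint of each such pair, leaving an $\varepsilon$-separated set $S$ with $|S|>N/2$.
\item It applies the doubling property $k=\lceil\log_2(4/\varepsilon)\rceil\le c+4$ times, starting from radius $2$.
\item The resulting $\lambda^k$ balls have diameter at most $\varepsilon$, so each holds at most one point of $S$ wherever it is centred. Hence $\lambda^k>N/2$ and $\dim(P)\ge(\log_2 N-1)/(c+4)$.
\end{itemize}
Tying $\varepsilon$ to $c$ and tolerating a few close pairs also removes your restriction $c<1/(16\ln 2)$. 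The cost is a weaker constant than your $\log_2 N$.

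Alternatively, you could explicitly adopt the intrinsic definition, in which the covering balls are balls of the metric space $P$ itself and hence centred at data points. Under it, both of your steps are valid. The primary event gives $\lambda=N$ exactly. The fallback, with the triangle-inequality remark dropped, gives $\lambda\ge N^{\Omega(1)}$ for every hidden constant. That is a simpler single-scale argument than the paper's. However, it is a different definition from the one the paper states, and you would need to say so.
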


\begin{proof}
Fix a constant $c$ with $N\le 2^{cd}$ and set $\varepsilon=2^{-(c+1)}\le\tfrac12$. Call a pair of points \emph{bad} if its distance is at most $\varepsilon$. A fixed pair is bad only if the second point lands in a spherical cap of radius $\varepsilon$ around the first, which occupies at most an $\varepsilon^{d-2}$ fraction of the sphere.
The expected number of bad pairs is thus at most $N^2\varepsilon^{d-2}\le N\cdot 2^{cd}\cdot 2^{-(c+1)(d-2)}=2^{2c+2-d}N$, so by Markov's inequality, with probability at least $1-2^{2c+3-d}=1-2^{-\Omega(d)}$ there are fewer than $N/2$ bad pairs. In that case, remove one point of each bad pair and let $S$ be the set of remaining points; then $|S|>N/2$, and any two points of $S$ are at distance greater than $\varepsilon$.

Every ball can be covered by $\lambda$ balls of half its radius. Start with a single ball of radius $2$, which covers all of $P$, and halve $k$ times; this covers $P$ by at most $\lambda^{k}$ balls of radius $2^{1-k}$. After $k=\lceil\log_2(4/\varepsilon)\rceil\le c+4$ halvings, each ball has diameter at most $\varepsilon$, so it contains at most one point of $S$. Since all of $S$ must be covered, $\lambda^{k}\ge|S|>N/2$, hence $\dim(P)\ge(\log_2 N-1)/(c+4)=\Omega(\log N)$. Also, any ball is covered by the~$N$ balls of half its radius centered at the points, so $\lambda\le N$ and $\dim(P) = O(\log N)$.
\end{proof}

\Paragraph{Regime Change} As long as the manifold close to a query point is undersampled, each new point can locally reveal a new geometric direction of the manifold. The growing intrinsic dimensionality of Theorem~\ref{thm:lg-scale-ddim} therefore reflects newly added points progressively resolving the manifold. Once~$N$ is large enough to sample every neighborhood densely, new points reveal no new direction, and the intrinsic dimensionality stabilizes. Theorem~\ref{thm:lg-scale-ddim} also suggests where the transition from the sparse to the dense regime occurs: the logarithmic growth of $d_{\mathrm{int}}$ only holds while~$N$ is at most exponential in the manifold's dimensionality, and beyond that scale the data enters the dense regime. This aligns with the results of prior work~\cite{niyogi2008finding}, which shows that resolving the structure of a manifold requires dataset size exponential in the manifold's dimensionality.

\DiscussionPara{disc:eps}{$\varepsilon$ Interpretation}
As noted earlier in this section, growing the dataset rescales the inner and outer hyper-balls by the same factor. The value of $\varepsilon$ should therefore stay stable as the dataset grows. We report whether this holds on three datasets from Table~\ref{table:datasets}; the others exhibit a similar pattern. For each dataset we record every node $v$ that the beam search visits, i.e., every node for which it computes the distance to the query. For each visited node we then compute the normalized distance to the query point: $d(q,v)/r_{k}$ with $k=10$. Figure~\ref{fig:visited_node_distr} shows the histogram of these normalized distances and reveals how far beyond the true nearest neighbors the search explores to find them.

In the top row, in each bucket of the normalized distance, we report the fraction of dataset points that the search process visits. For each dataset, this profile nearly coincides across dataset sizes. The fraction of points the search visits at each normalized distance is therefore scale-invariant. As a result, the outer hyper-ball rescales similarly with the inner hyper-ball, and $\varepsilon$ stays stable as the dataset~grows.

The bottom row shows where the search process spends most of its effort, reporting the absolute number of visited nodes, i.e., distance computations, per bucket. At every dataset size, the cost concentrates in a band of near-ties just beyond the inner hyper-ball. The vertical dashed line marks the normalized distance below which $80\%$ of the distance computations occur, and this threshold stays small at every scale.

\DiscussionPara{disc:assumptions}{Relaxing the Assumptions}
We now revisit the three assumptions behind our analysis and examine what changes when they are relaxed. The first assumption is that queries are drawn from the same distribution as the data. When this does not hold, as in out-of-distribution workloads, the outer hyper-ball may intersect multiple manifolds~\cite{brown2022verifying}. In this case, the growth of the exploration phase's cost is governed by the largest intrinsic dimensionality among the intersected manifolds. 

The second assumption is that the dataset is sufficiently dense. This ensures that the outer hyper-ball remains small relative to the scale at which the manifold bends. When the dataset is sparse, however, the projection of the hyper-ball onto the manifold may no longer resemble a hyper-ball and can instead take more irregular shapes. This changes the number of points contained in the hyper-ball and, consequently, affects Corollary~\ref{cor:local-hyperball-scaling-n}. Nevertheless, the corollary continues to hold as long as the intrinsic dimensionality realized at each scale follows the same trend across the sparse and dense regimes. The same argument applies when the local uniformity model is violated: our analysis remains valid as long as this trend in intrinsic dimensionality is preserved.\footnote{The maximum-likelihood estimator of local intrinsic dimensionality assumes a smooth underlying distribution~\cite{levina2004maximum}, which need not hold in out-of-distribution cases or when the dataset is too sparse.} 

Finally, we assume that the distance metric is Euclidean. If an angular metric is used instead, Euclidean distance between $\ell_2$-normalized vectors is a strictly increasing function of the angle, so the two metrics induce the same ordering of distances from any query point and, in particular, the same nearest neighbors~\cite{qian2004similarity, jain2016approximate}. Even more strongly, the two metrics are locally essentially identical; the outer hyper-ball around the query is therefore unaffected by the choice of metric, and the analysis carries over without modification.

\subsection{Theoretical Analysis}
\label{sec:theory-power-law}

In Section~\ref{sec:anatomy_of_query_cost}, we estimated the search cost of graph-based indexes by empirically tying the search cost to the number of points in the query's neighborhood. In this section, we rigorously establish the same cost bounds by also accounting for the effect of these graphs' edges on the search. We analyze the costs of the shortcut and exploration phases in the sparse regime (Section~\ref{sec:theory-sparse}), followed by the dense regime (Section~\ref{sec:theory-dense}). Together, these analyses establish the bounds of Table~\ref{tab:regime-costs-uniform} and the relevant transition thresholds. 

\Paragraph{Setting}
To streamline our exposition, we focus on SNG~\cite{SNG} in our proofs, and later argue how the same proofs can be adapted to other graph-based indexes such as HNSW~\cite{HNSW} and Vamana~\cite{DiskANN}. To provide the strongest result possible, we study the query cost of the exact version of SNG and allow the construction \mbox{to run free of charge}. 
We also discuss how the same results hold for the practical versions of the graph-based indexes in the sparse regime, as long as they are allowed to compute the highest quality neighbors for each node using a candidate list of unbounded size (i.e., $ef_{\text{construction}} = \infty$).

We analyze the exploration phase under the local uniformity model, where the points within a local neighborhood of a query are approximately uniformly distributed. Since the shortcut phase depends on the global geometry of the data, one cannot prove a non-trivial bound for it in the local uniformity model without additional assumptions. Thus, we restrict the shortcut phase's analysis to datasets whose points are uniformly distributed on a sphere. Henceforth, we refer to the costs of these two phases in their respective settings as the \emph{locally uniform exploration} and \emph{uniform shortcut} costs.

\subsubsection{\textbf{Sparse Regime}}
\label{sec:theory-sparse}
In the sparse regime, $d_{\mathrm{int}} = \Theta(\log N)$. As discussed in Section~\ref{sec:anatomy_of_query_cost}, such a value of~$d_{\mathrm{int}}$ causes the number of points within the approximately uniform neighborhood of a query point to be~$N^c$, where~$N$ is the dataset size and~$c \in (0,1]$ is a constant. We now prove that SNG's locally uniform exploration cost grows as a power law in the size of such a neighborhood, yielding a Sublinear Power Law in~$N$. We then show that the same argument implies that SNG's uniform shortcut cost grows similarly.

\Paragraph{Reduction to~$M$-NN Graphs or High-Degree Graphs}
The central part of our argument shows that, when using SNG to index a set of uniformly distributed points, one of two scenarios occurs: (1)~if the maximum degree~$M$ is small, SNG connects each node to its~$M$ nearest neighbors with high probability, and (2)~if~$M$ is large, the average node degree follows a power law in~$N$. In the former case, SNG behaves similarly to an~$M$-NN graph, which is known to have a power-law search cost~\cite{laarhoven2017graph,prokhorenkova2020graph}. In the latter case, navigating a single node entails computing the distances of a power-law number of neighboring nodes to the query. Together, these cases imply that the uniform shortcut cost and the locally uniform exploration cost follow Sublinear Power Laws in~$N$, giving rise to the leftmost two columns of Table~\ref{tab:regime-costs-uniform}.

\Paragraph{Connection Probability}
To prove the above results, we begin by bounding the probability of each node connecting to its~$k${\nobreakdash-}th nearest neighbor:
\begin{lemma}[Connection Probability]
    \label{lemma:connection_probability}
    Consider a node~$u$ within a region of the space where points are (approximately) uniformly distributed. Denoting~$u$'s $k$-th nearest neighbor as~$v_k$ and letting $p^*=d_{\mathrm{int}}^{\Theta(1)} \cdot 2^{-d_{\mathrm{int}}/2}=\log^{\Theta(1)} N \cdot N^{-c/2}$, $u$ is connected to~$v_k$ with a probability of at least~$1 - 2 \cdot (k-1) \cdot p^*$ when~$k \leq M$.
\end{lemma}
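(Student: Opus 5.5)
The plan is to bound the probability that $v_k$ is pruned by a union bound over the $k-1$ points closer to $u$ than $v_k$. Under the SNG rule, $u$ scans candidates in increasing order of distance. Since $k\le M$, the degree cap never binds before $v_k$ is considered, so $v_k$ is discarded only if some already-kept neighbor $v_i$ ($i<k$) occludes it. Occlusion means $v_i$ lies in the lune of $u$ and $v_k$, i.e.\ $d(v_i,v_k)<d(u,v_k)$. Being kept is only a further restriction, so it suffices to show that for each fixed $i<k$,
\[
\Pr\bigl[d(v_i,v_k)<d(u,v_k)\bigr]\le 2p^*,
\]
and then sum over the $k-1$ indices to get failure probability at most $2(k-1)p^*$.

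To bound a single term, I will work in the local uniformity model on the $d_{\mathrm{int}}$-dimensional manifold patch around $u$. Condition on the distances $r_i=d(u,v_i)\le r_k=d(u,v_k)$. Under uniformity, the directions of $v_i$ and $v_k$ from $u$ are independent and uniform on the unit sphere $\mathbb{S}^{d_{\mathrm{int}}-1}$. The condition $d(v_i,v_k)<r_k$ then becomes, by the law of cosines,
\[
r_i^2-2r_ir_k\cos\theta<0, \qquad\text{i.e.}\qquad \cos\theta>\frac{r_i}{2r_k},
\]
where $\theta$ is the angle between the two directions. I split into two cases. If $r_i\ge r_k/2$ (the typical case, since in high dimension the distances $r_1,\dots,r_k$ concentrate), then $\cos\theta>1/4$ is needed. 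If $r_i<r_k/2$, the worst case is $\cos\theta>0$, which has probability $1/2$. That is too weak, so this event has to be charged separately.

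\emph{Case $r_i<r_k/2$.} I bound the probability of this event directly. Under uniformity, $(r_i/r_k)^{d_{\mathrm{int}}}$ is distributed like an order statistic ratio of uniforms, so $\Pr[r_i<r_k/2]\le (k-1)\cdot 2^{-d_{\mathrm{int}}}$ roughly. I expect this to be at most $p^*$ after absorbing polynomial factors into $d_{\mathrm{int}}^{\Theta(1)}$; this is where the factor $2$ in $2p^*$ comes from.

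\emph{Case $r_i\ge r_k/2$.} I need the spherical cap estimate: for uniform $\theta$ on $\mathbb{S}^{d-1}$, $\Pr[\cos\theta>t]\le d^{O(1)}(1-t^2)^{(d-1)/2}$. With $t=1/4$ this gives $(15/16)^{d/2}$, which is not $2^{-d/2}$. So I will instead use the sharper joint distribution: use that $r_i/r_k$ is close to $1$ with high probability, so the threshold becomes $\cos\theta>1/2$. The cap then has measure $d^{O(1)}(3/4)^{d/2}$. This is still not $2^{-d/2}$, so I would refine the estimate by integrating over the joint law of $(r_i/r_k,\theta)$, equivalently computing the volume of the lune intersected with the ball $B(u,r_k)$ relative to the ball. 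That volume is where the stated $2^{-d_{\mathrm{int}}/2}$ should come from, with the $d^{\Theta(1)}$ prefactor from the usual Gamma-function asymptotics of cap volumes.

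The main obstacle is this last computation: obtaining exactly the exponent $2^{-d_{\mathrm{int}}/2}$ from the lune-in-ball volume fraction rather than a weaker constant base. I would first try computing the volume of $\{x\in B(0,1): |x-e|<1\}$ as an integral over the height $h=\langle x,e\rangle\ge 1/2$ of slices of radius $\sqrt{1-h^2}$. This gives a dominant contribution near $h=1/2$ of order $(3/4)^{d/2}$. If that is the truth, I would reconcile it with $p^*$ through the conditioning on $v_k$ being exactly the $k$-th neighbor, which places $v_i$'s radius near $r_k$ and $v_k$ on the sphere's boundary. Finally, substituting $d_{\mathrm{int}}=\Theta(\log N)$, i.e.\ $2^{-d_{\mathrm{int}}}=N^{-c}$, yields the second expression for $p^*$.
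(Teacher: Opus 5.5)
Your reduction matches the paper's proof. Since $k\le M$, the degree cap is irrelevant. $v_k$ is pruned only if some $v_i$ with $i<k$ lies in the lune $L_k$, and a union bound that drops the event ``$v_i$ was kept'' leaves $\Pr[v_i\in L_k]$. The paper finishes in one line. $L_k$ is the union of two congruent caps of height $r_k/2$, one in $B(u,r_k)$ and one in $B(v_k,r_k)$. Each cap is asserted, via \cite[Lemma~2]{laarhoven2017graph}, to occupy a $p^*=d_{\mathrm{int}}^{\Theta(1)}2^{-d_{\mathrm{int}}/2}$ fraction of the ball. That is where the factor $2$ comes from, not your $r_i<r_k/2$ case. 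Your law-of-cosines case split is also unnecessary. Conditionally on $v_k$, the $k-1$ closer points are i.i.d.\ uniform in $B(u,r_k)$, so each lies in $L_k\subseteq B(u,r_k)$ with probability exactly $\mathrm{vol}(L_k)/\mathrm{vol}(B(u,r_k))$.

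The genuine gap is the step you flag yourself, and your proposed repair cannot work. For every $r_i\le r_k$, the occlusion event $\cos\theta>r_i/(2r_k)$ contains $\{\cos\theta>1/2\}$, whose measure on $\mathbb{S}^{d_{\mathrm{int}}-1}$ is $d_{\mathrm{int}}^{-O(1)}(3/4)^{d_{\mathrm{int}}/2}$. Pushing $r_i$ toward $r_k$ is therefore the most favorable case, not a sharpening. Your slice integral is correct and two-sided: a cap of height $r_k/2$ occupies a $d_{\mathrm{int}}^{\Theta(1)}(3/4)^{d_{\mathrm{int}}/2}$ fraction of the ball. Since $(3/4)^{d/2}/2^{-d/2}=(3/2)^{d/2}$, no polynomial prefactor absorbs the difference. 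Take $k=2$ concretely: $v_1$ is always kept, and $v_2$ is pruned exactly when $v_1\in L_2$. So the pruning probability is at least $d_{\mathrm{int}}^{-O(1)}(3/4)^{d_{\mathrm{int}}/2}$, which exceeds $2p^*$ for large $d_{\mathrm{int}}$.

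So the base $2^{-1/2}$ cannot be derived from the lune geometry. In the paper it rests only on the citation, whereas the standard cap estimate $(1-\alpha^2)^{d/2}$ at $\alpha=1/2$ gives $(3/4)^{d/2}$. What your argument, and the paper's, actually establishes is the lemma with
$p^*=d_{\mathrm{int}}^{O(1)}(3/4)^{d_{\mathrm{int}}/2}=\log^{O(1)}N\cdot N^{-\frac{c}{2}\log_2(4/3)}$.
This is still $N^{-\Omega(1)}$ in the sparse regime, which is all that Theorems~\ref{thm:case_1} and~\ref{thm:case_2} use, with $c/2$ and $c/4$ scaled by $\log_2(4/3)$. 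You should state and prove that version rather than try to reconcile the constant.
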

\begin{proof}
    We observe that the exact version of SNG connects~$u$ to its~$M$ nearest neighbors that are not pruned. Recall that a node~$v_k$ is pruned if~$u$ is already connected to some other node~$v_i$ with $i < k$ that satisfies~$d(u,v_k) \geq d(v_i,v_k)$. This corresponds to the neighbor~$v_i$ being included in the ``lune'' between~$u$ and $v_k$, as shown in Figure~\ref{fig:graph_constructions}{\nobreakdash-}b.
    
    Thus, denoting the lune between~$u$ and $v_k$ by $L_k$, $v_k$ is pruned with a probability of at most
    \begin{align*}
        \Pr[v_k \text{ pruned}] & = \Pr[\exists v_i, i < k: v_i \in L_k] \\
        & \leq \sum_{i=1}^{k-1} \Pr[u \text{ connected to } v_i \land v_i \in L_k] \\
        & \leq \sum_{i=1}^{k-1} \Pr[v_i \in L_k]. \\
    \end{align*}
    Now, we observe that the lune~$L_k$ is comprised of two spherical caps of height~$d(u,v_k)/2$. The probability of a uniformly distributed point~$v_i$ with $i < k$ being within such a spherical cap is at most the ratio of the volume of this cap to the volume of the hyper-ball of radius~$d(u,v_k)$ around $u$. This quantity is known to be~$p^* = d_{\text{int}}^{\Theta(1)} \cdot 2^{-d_{\text{int}}/2}$ \cite[Lemma~2]{laarhoven2017graph}. Thus, the probability that~$v_i$ is within the lune~$L_k$ is at most~$2 \cdot p^*$. Plugging this expression into the sum above yields $\Pr[v_k \text{ pruned}] \leq \sum_{i=1}^{k-1} 2 \cdot p^* = 2 \cdot (k-1) \cdot p^*$. Finally, noting that~$u$ must be connected to $v_k$ if $v_k$ is not pruned yields 
    $$ 
    \Pr[u \text{ connected to } v_k] = 1-\Pr[v_k \text{ pruned}] \geq 1 - 2 \cdot (k-1) \cdot p^*,
    $$
    establishing the lemma.
\end{proof}

%\Paragraph{Consequences of Lemma~\ref{lemma:connection_probability}}
Lemma~\ref{lemma:connection_probability} has two key consequences in the following two cases of the maximum degree~$M$:

\textbf{Case (1) of Small~$M$.}
This case leads to a large subgraph of SNG being an~$M${\nobreakdash-}NN graph with high probability within a region of the space where the dataset is approximately uniformly distributed:
\begin{theorem}[\textsc{$M$-NN Subgraph}]
    \label{thm:case_1}
    Consider a point~$x$ around which the dataset is approximately uniformly distributed. The~$n = 1/(2 \cdot M^2 \cdot p^*) = N^{c/2} / (2 \cdot M^2 \cdot \log^{\Theta(1)}N)$ nearest points to~$x$ are connected to their~$M$ nearest neighbors with a probability of at least~$1/2$.
\end{theorem}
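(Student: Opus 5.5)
The plan is a union bound over the $n$ points nearest to $x$ and over each point's $M$ nearest neighbors, with Lemma~\ref{lemma:connection_probability} controlling each individual failure event. Let $u_1,\dots,u_n$ be the $n$ nearest points to $x$. Because the dataset is approximately uniform around $x$, and $n$ is polynomially smaller than $N$, each $u_j$ and its $M$ nearest neighbors lie inside the locally uniform region. The hypotheses of Lemma~\ref{lemma:connection_probability} therefore hold at every such $u_j$.

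For a fixed $u_j$ and a fixed $k\le M$, Lemma~\ref{lemma:connection_probability} says that $u_j$ fails to connect to its $k$-th nearest neighbor $v_k$ with probability at most $2(k-1)p^*$. Summing over $k=1,\dots,M$ bounds the probability that $u_j$ misses any of its $M$ nearest neighbors:
\[
\sum_{k=1}^{M}2(k-1)p^* = M(M-1)p^* \le M^2 p^*.
\]

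A second union bound over the $n$ nodes bounds the probability that some $u_j$ is not connected to all of its $M$ nearest neighbors by $n M^2 p^*$. Setting $n=1/(2M^2p^*)$ makes this at most $1/2$, which gives the claimed success probability of at least $1/2$.

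The explicit form of $n$ follows by substituting $p^*=\log^{\Theta(1)}N\cdot N^{-c/2}$, which yields $n=N^{c/2}/(2M^2\log^{\Theta(1)}N)$. A union bound needs no independence between events, so the correlations among the pruning events of neighboring nodes, which share points, cause no difficulty.

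The main obstacle is justifying that the lemma applies uniformly to all $n$ points. Each $u_j$ needs a neighborhood reaching out to its $M$-th nearest neighbor that is still approximately uniform, and the lune volume estimate of \cite[Lemma~2]{laarhoven2017graph} must hold at each of them with the same $p^*$.

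To handle this, I would first argue that the $n$ points and their $M$-neighborhoods all fit in a ball around $x$ containing $O(n+M)$ points. This ball is still tiny relative to $N$, so it sits inside the locally uniform region, and the local uniformity model lets the lemma be applied with the common bound $p^*$.

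A secondary subtlety is the conditioning in the lemma. It bounds $\Pr[v_i\in L_k]$ for $i<k$ even though $v_i$ is conditioned to be closer to $u$ than $v_k$. I would take this bound as already established in the paper and not revisit it.
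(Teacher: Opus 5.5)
Your proposal is correct and matches the paper's proof. A union bound over $k \le M$ using Lemma~\ref{lemma:connection_probability} bounds each node's failure probability by $M(M-1)p^* \le M^2 p^*$, and a second union bound over the $n = 1/(2M^2p^*)$ nearest points to $x$ bounds the total failure probability by $1/2$. The paper treats the lemma's applicability at all $n$ points as part of the local-uniformity hypothesis, so your extra justification is not needed. It is also inaccurate in high dimension: a ball of radius $r_n$ holding the $n$ points, enlarged by the $M$-th neighbor distance, does not contain only $O(n+M)$ points. When $M \ll n$ and $d_{\mathrm{int}} = \Theta(\log N)$, that distance is a constant fraction of $r_n$, so the enlargement multiplies the point count by $2^{\Theta(d_{\mathrm{int}})}$.
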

\begin{proof}
    By applying a union bound to the connection probability of Lemma~\ref{lemma:connection_probability}, it follows that SNG connects a node~$u$ to all of its~$M$ nearest neighbors with a probability of at least~$1-M^2 \cdot p^* = 1 - M^2 \cdot \log^{\Theta(1)} N \cdot N^{-c/2}$. Through another union bound, this implies that all of the~$n = N^{c/2} / (2 \cdot M^2 \cdot \log^{\Theta(1)} N)$ nearest points to~$x$ must be connected to their~$M$ nearest neighbors with a probability of at least~$1/2$.
\end{proof}
When~$M = N^{o(1)}$, Theorem~\ref{thm:case_1} implies that in the subgraph of SNG restricted to the~$N^{c/2 - o(1)}$ nearest points of~$x$, denoted by~$\overline{G}$, each node is connected to at most~$M$ of its nearest neighbors. Since the points in~$\overline{G}$ are uniformly distributed, this intuitively implies that each of its nodes is connected to at most~$M$ other nodes that are ``random,'' meaning that they may or may not aid in navigating closer to a query. In such a scenario, the search must traverse many nodes and compare the distances of many points to the query to find its nearest neighbor with a non-trivial probability.

This intuition is formalized in~\cite{laarhoven2017graph}, where the author shows that using greedy search on an~$M${\nobreakdash-}NN graph\footnote{More precisely, the graph the author of~\cite{laarhoven2017graph} analyzes connects any two nodes that are within a tunable distance threshold. This thresholding causes the nodes of this graph to have variable degrees rather than strictly bounded degrees. Nevertheless, the proofs of~\cite{laarhoven2017graph} can be adapted to the setting where the degree is strictly bounded by~$M$.} to reach a node that is at most a constant factor farther than a query's true nearest neighbor entails a number of distance computations that follows a Sublinear Power Law in the graph's size.\footnote{Specifically, instead of classic beam search, \cite{laarhoven2017graph} analyzes a query algorithm where the search procedure restarts from a random node when it converges to a local minimum. Nevertheless, one can analyze beam search using a similar approach to random restarts.} A slight adaptation of their proof would extend their result to any graph where each node is connected to some number of its nearest neighbors that is less than or equal to~$M$. 

To analyze SNG's uniform shortcut cost, we consider the subgraph~$\overline{G}_{\text{entry}}$ around SNG's entry point and apply Theorem~\ref{thm:case_1}. Then, by invoking the result of~\cite{laarhoven2017graph} for~$\overline{G}_{\text{entry}}$, we derive the cost as a power law in the size of the entry point's neighborhood, i.e., $N^c$ for some constant $c \in (0, 1]$. As discussed at the beginning of this section, this corresponds to a Sublinear Power Law in~$N$. Similarly, to analyze SNG's locally uniform exploration cost, we consider the subgraph~$\overline{G}_q$ around the query~$q$ and leverage the approximately uniform distribution of its points to apply Theorem~\ref{thm:case_1}. This allows us to once again invoke the result of~\cite{laarhoven2017graph} to show that the cost follows a power law in~$N$.

\textbf{Case (2) of Large~$M$.}
In this case, each node within the approximately uniform neighborhood of the query~$q$ has a high degree with high probability:
\begin{theorem}[\textsc{High-Degree Graph}]
    \label{thm:case_2}
    When the dataset is uniformly distributed and~$M=N^{\Theta(1)}$, each node within the uniform neighborhood of a query~$q$ has a degree of at least~$\min(M, N^{c/4-o(1)})=N^{\Theta(1)}$ with a probability of at least~$1/2$.
\end{theorem}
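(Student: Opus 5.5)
The plan is to reuse Lemma~\ref{lemma:connection_probability} exactly as Theorem~\ref{thm:case_1} does, but to cap the number of nearest neighbors we ask to be connected. Instead of all~$M$, we ask only for the first $m=\min(M,\,m^*)$ nearest neighbors, where $m^*$ is chosen so that a union bound still leaves a failure probability of at most~$1/2$. Any node connected to its $m$ nearest neighbors has degree at least~$m$, so it suffices to show that $m=\min(M,N^{c/4-o(1)})$ neighbors are simultaneously connected with probability at least~$1/2$.

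\textbf{Single-node bound.} Fix a node~$u$ in the uniform neighborhood of~$q$. By Lemma~\ref{lemma:connection_probability}, for every $k\le m\le M$, $u$ fails to connect to~$v_k$ with probability at most $2(k-1)p^*$. A union bound over $k=1,\dots,m$ bounds the probability that some $v_k$ with $k\le m$ is pruned by
\[
\textstyle\sum_{k=1}^{m}2(k-1)p^*\le m^2p^*.
\]
Setting $m^*=(2p^*)^{-1/2}=N^{c/4}/\log^{\Theta(1)}N=N^{c/4-o(1)}$ makes this at most~$1/2$. Hence $u$ has degree at least $\min(M,N^{c/4-o(1)})$ with probability at least~$1/2$. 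Because $M=N^{\Theta(1)}$ and $c/4>0$ is a constant, this minimum is $N^{\Theta(1)}$.

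\textbf{Interpreting ``each node''.} If the statement is meant per node, which is how I would read it, the argument is already complete. If it should hold simultaneously for all nodes in the neighborhood, as in Theorem~\ref{thm:case_1}, I would add a union bound over those nodes. To keep the total failure probability at most~$1/2$, restrict to the $n'=N^{c/4-o(1)}$ nodes nearest~$q$ and shrink $m$ to $N^{c/8-o(1)}$, so that $n'm^2p^*\le 1/2$. The conclusion remains $N^{\Theta(1)}$, at the cost of a smaller constant in the exponent.

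\textbf{Main obstacle.} The subtle step is making sure Lemma~\ref{lemma:connection_probability} really applies for all $k\le m$ when $m$ is polynomially large. This requires the $m$ nearest neighbors of~$u$ to lie inside the approximately uniform region, and the lune-cap estimate $p^*$ of~\cite{laarhoven2017graph} to stay valid at radius $d(u,v_m)$. I would handle this by noting that the uniform neighborhood contains $N^{c}$ points while $m\le N^{c/4}$, so $v_m$ sits deep inside the region where local uniformity holds. I would also note that the lemma's bound is independent of the specific radius, because the lune-to-ball volume ratio is scale-invariant.
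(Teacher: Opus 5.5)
Your proposal is correct and follows the paper's proof: the paper also applies the union bound from Theorem~\ref{thm:case_1} to Lemma~\ref{lemma:connection_probability} over the first $\min(M, N^{c/4}/\log^{\Theta(1)}N)$ nearest neighbors of a single node. The only difference is that the paper writes this as two explicit cases, $M$ below or above that threshold, where you use a single minimum. Your added remarks, on the variant that holds simultaneously for all nodes and on keeping the $m$ neighbors inside the uniform region, go beyond what the paper states but are consistent with it.
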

\begin{proof}
    If~$M = N^{\Theta(1)} < N^{c/4}/(2 \cdot \log^{\Theta(1)} N)$, a similar union bound to that of Theorem~\ref{thm:case_1} applied to the connection probability of Lemma~\ref{lemma:connection_probability} shows that any node~$u$ will be connected to its~$M=N^{\Theta(1)}$ nearest neighbors with a probability of at least~$1/2$. If~$M \geq N^{c/4}/(2 \cdot \log^{\Theta(1)} N)$, by the same union bound, any node $u$ will be connected to its~$N^{c/4}/(2 \cdot \log^{\Theta(1)} N)$ nearest neighbors with a probability of at least~$1/2$.
\end{proof}
Theorem~\ref{thm:case_2} implies that, for both the uniform shortcut cost and the locally uniform exploration cost, processing even a single node entails a number of distance computations that follows a power law in~$N$.

As such, the results in the two columns on the left of Table~\ref{tab:regime-costs-uniform} hold in both cases of~$M$. Note that the uniform shortcut cost in the leftmost column is constant since, in that setting, Theorem~\ref{thm:case_1} implies that all nodes are connected to each other, so the search begins in the exploration phase and skips the shortcut phase.

\Paragraph{Extension to Other Graph-based Indexes}
The exact versions of the major graph-based indexes, such as HNSW, Vamana, and \(\tau\)-MG, employ pruning rules at least as strict as that of SNG. As such, an analogue of Lemma~\ref{lemma:connection_probability} holds for them with a probability bound that is at least as high. This implies that the same conclusions as those of Theorems~\ref{thm:case_1} and \ref{thm:case_2} hold for these graphs, as these results only depend on the bounds of Lemma~\ref{lemma:connection_probability}.

For the practical versions of these graphs, when allowed to find the highest quality neighbors of each node using an unbounded candidate list (i.e., $ef_\text{construction}=\infty$), the same Sublinear Power Law results hold.\footnote{Here, we consider incremental insertion as most practical graph indexes are implemented accordingly.} The reason is two-fold: (1)~When the maximum degree~$M$ is small, the same analysis as Lemma~\ref{lemma:connection_probability} shows that, with high probability, each node after the first~$M$ nodes is connected to the~$M$ nearest nodes from those that were inserted earlier. This does not affect our use of the results in~\cite{laarhoven2017graph}, as it only slightly alters the probability of the search encountering a local minimum. (2)~When~$M$ is large, Lemma~\ref{lemma:connection_probability} applied to the~$N/2$ points inserted later during construction implies that their degrees must be asymptotically as large as that of Theorem~\ref{thm:case_2}. Thus, processing even one of these nodes, which happens with a constant probability, incurs a power-law cost.

\subsubsection{\textbf{Dense Regime}}
\label{sec:theory-dense}
We now turn to the dense regime and analyze the uniform shortcut and locally uniform exploration costs. We begin by showing that the node degree is negligible in this regime, which leaves the bulk of the search costs to the number of steps. We lower bound the number of shortcut steps by analyzing the average progress each step makes towards the query. For the locally uniform exploration cost, we simply bound the number of points within a query's neighborhood.

\Paragraph{Node Degree}
As proven in \cite[Lemma~3.3]{indyk2023worst}, the degree of a node in Vamana is at most~$O(2^{\Theta(d_{\text{int}})} \cdot \text{poly}\log N)$.\footnote{The poly-logarithmic term corresponds to $\log \Delta$ in~\cite[Lemma~3.3]{indyk2023worst}, where~$\Delta$ is the ratio between the longest and shortest pairwise distances of points in the dataset. We have assumed that $\Delta=\text{poly}(N)$ to reflect real-world datasets.} We further observe that applying the same lemma with a value of~$\alpha=1$ shows that the degree of each node in SNG or HNSW is at most~$O(2^{\Theta(d_{\text{int}})})$. Since, as described in Section~\ref{sec:anatomy_of_query_cost}, $d_{\text{int}} = o(\log N)$ in the dense regime, we get the following lemma:
\begin{lemma}[Negligible Degree]
    \label{lemma:negligible_degree}
    In the dense regime, the degree of any node in SNG, HNSW, or Vamana is at most~$O(2^{\Theta(d_{\text{int}})} \cdot \text{poly}\log N) = N^{o(\log N)/\log N} \cdot \text{poly}\log N = N^{o(1)}$.
\end{lemma}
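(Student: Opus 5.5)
The plan is to obtain Lemma~\ref{lemma:negligible_degree} as a direct consequence of the degree bound of \cite[Lemma~3.3]{indyk2023worst}, combined with the definition of the dense regime. First, I would restate that bound in our notation. For Vamana with slack $\alpha>1$, the out-degree of every node is $O((4\alpha)^{d}\log\Delta)$, where $d$ is the doubling dimension. With $\alpha$ a fixed constant (e.g., the default $1.2$), $(4\alpha)^{d}=2^{d\log_2(4\alpha)}=2^{\Theta(d)}$. Identifying $d$ with $d_{\mathrm{int}}$ and using the standing assumption $\Delta=\mathrm{poly}(N)$ gives $\log\Delta=O(\log N)$, so the degree is $O(2^{\Theta(d_{\mathrm{int}})}\cdot\mathrm{poly}\log N)$.

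For SNG and HNSW I would rerun the argument of \cite[Lemma~3.3]{indyk2023worst} at $\alpha=1$. That proof covers the annulus around a node $u$ at each distance scale $[2^{i},2^{i+1})$ by balls of radius proportional to the scale, with $2^{O(d_{\mathrm{int}})}$ balls per annulus. It then argues that the pruning rule keeps at most one neighbor per small ball, since a kept neighbor occludes any later candidate in the same ball. At $\alpha=1$ this occlusion step is exactly the lune rule. Because the lune rule itself yields an angular separation of at least $\pi/3$ between kept neighbors, I would bypass the per-scale decomposition and bound the degree directly by a packing number: the number of directions that are pairwise at least $\pi/3$ apart in a $d_{\mathrm{int}}$-dimensional local geometry, which is $2^{O(d_{\mathrm{int}})}$. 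This matches the SNG bound quoted in Section~\ref{sec:background}. For HNSW, each layer applies the SNG rule, so the same bound holds per layer. The degree that search actually touches is capped by $M$ per layer, and over $O(\log N)$ layers this contributes at most a polylogarithmic factor, which the stated bound already absorbs.

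The final step is the arithmetic. In the dense regime $d_{\mathrm{int}}=o(\log N)$, so
\[
2^{\Theta(d_{\mathrm{int}})}=N^{\Theta(d_{\mathrm{int}})/\log_2 N}=N^{o(1)},
\qquad
\mathrm{poly}\log N=N^{O(\log\log N/\log N)}=N^{o(1)},
\]
and the product of the two is still $N^{o(1)}$.

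The main obstacle is justifying the passage from doubling dimension to $d_{\mathrm{int}}$, together with the $\alpha=1$ case. The bound of \cite{indyk2023worst} is stated for $\alpha>1$, and its per-scale count blows up as $\alpha\to1$ through factors like $1/(\alpha-1)$. This is why I would rely on the angular-separation packing argument rather than plug in $\alpha=1$ directly. That argument needs the local neighborhood of $u$ to behave like $d_{\mathrm{int}}$-dimensional space. I would justify this by the doubling-dimension covering: a ball around $u$ is covered by $2^{O(d_{\mathrm{int}})}$ balls of one-quarter its radius, and two kept neighbors cannot share such a ball at comparable distance without one lying in the other's lune. This counts neighbors per distance scale, and the number of scales is $O(\log\Delta)=O(\log N)$, so the resulting bound is $2^{O(d_{\mathrm{int}})}\log N$, which is still $N^{o(1)}$.
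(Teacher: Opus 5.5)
Your proposal is correct and follows the paper's route: invoke \cite[Lemma~3.3]{indyk2023worst} for Vamana with $\log\Delta=O(\log N)$, specialize it to $\alpha=1$ for SNG and HNSW, and use $d_{\mathrm{int}}=o(\log N)$ to turn $2^{\Theta(d_{\mathrm{int}})}\cdot\mathrm{poly}\log N$ into $N^{o(1)}$. One correction to your motivation: the degree bound $O((4\alpha)^{d}\log\Delta)$ does not degenerate as $\alpha\to1$, since the $1/(\alpha-1)$ factors enter only the step count and the approximation ratio. Your per-scale covering argument is therefore exactly that lemma at $\alpha=1$, and the angular-packing detour is unnecessary; the $\log N$ factor you keep is what this specialization gives and is absorbed by the $\mathrm{poly}\log N$ in the statement.
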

Lemma~\ref{lemma:negligible_degree} posits that processing a single node, be it in the shortcut phase or the exploration phase, takes~$N^{o(1)}$ time. In particular, when~$N \geq 2^{2^{d_{\text{int}}}}$, we have $d_{\text{int}} = O(\log\log N)$, which implies that the degree of each node is at most $O(2^{O(\log \log N)}) = O(\text{poly}\log N)$.  This establishes that the uniform shortcut and locally uniform exploration costs are at least as large as the expressions in the right two columns of Table~\ref{tab:regime-costs-uniform}. We now discuss upper bounds on these costs by analyzing the number of nodes processed in each phase.

\Paragraph{Uniform Shortcut Cost}
For Vamana, it is known that each step of the shortcut phase takes the search closer to any query~$q$ by a constant factor~\cite{indyk2023worst}. From \cite[Lemma~2]{laarhoven2017graph}, we observe that the nearest neighbor of any query~$q$ is within an expected distance of~$N^{-\Theta(1/d_{\mathrm{int}})}$ of it, implying that the outer hyper-ball around~$q$ must have a radius of~$O(N^{-\Theta(1/d_{\mathrm{int}})})$. Thus, since the distance from the graph's entry point to~$q$ is at most a constant when the points are distributed on a sphere, the number of shortcut steps is at most~$O(\log \frac{O(1)}{O(N^{-\Theta(1/d_{\mathrm{int}})})}) = \Theta((1/d_{\mathrm{int}}) \cdot \text{poly}\log N)$. Multiplying by each node's degree, one derives upper bounds matching the expressions in the two columns on the right of Table~\ref{tab:regime-costs-uniform}.

Compared to Vamana, upper bounding SNG and HNSW's uniform shortcut costs is more complex. In fact, we show the number of steps in SNG's shortcut phase is at least~$\Omega(N^{-1/d_{\mathrm{int}}})$, i.e., significantly higher than Vamana's poly-logarithmic number of steps. Our argument quantifies the average progress each search step makes:
\begin{lemma}[SNG Maximum Progress]
    \label{lemma:sng_expected_progress}
    In SNG's shortcut phase on a dataset of~$N$ points uniformly distributed on a sphere, with high probability, all steps that take the search closer to a query~$q$ reduce the distance to~$q$ by at most~$\text{poly}(d_{\mathrm{int}}) \cdot \left( \frac{\log N}{N} \right)^{1/d_{\mathrm{int}}}$.
\end{lemma}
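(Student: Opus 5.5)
The plan is to show that every out-neighbor of a node in SNG lies within distance roughly the typical nearest-neighbor spacing, up to a $\text{poly}(d_{\mathrm{int}})$ factor, of some point of that node that is itself ``locally nearest.'' Concretely, each step moves from the current node $u$ to a neighbor $v$. The reduction in distance to $q$ is at most $d(u,v)$ by the triangle inequality, so it suffices to bound the length of \emph{useful} edges, namely those that decrease the distance to $q$. I will not attempt to bound all edge lengths, since SNG does contain long edges. Instead I will argue that a long edge $u\to v$ forces an empty lune, and that with high probability no such lune exists once it is large.

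Set $\rho = \left(\frac{C\log N}{N}\right)^{1/d_{\mathrm{int}}}$, so that a ball of radius $\rho$ on the sphere contains, in expectation, $C\log N$ points. By a Chernoff bound and a union bound over a $\rho/2$-net of the sphere (of size $N^{O(1)}$ in the sparse-to-dense range), every ball of radius $\rho/2$ contains at least one data point with probability $1-N^{-\Omega(C)}$. Call this the \emph{covering event}.

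Next take any edge $u\to v$ with $d(u,v)=D$. By the SNG rule, no earlier neighbor of $u$ lies in the lune $L(u,v)$. The stronger fact I need is that no data point at all lies in the lune. I would derive this from the sorted-order construction: if $w\in L(u,v)$, then $w$ is closer to $u$ than $v$ is, so $w$ was considered before $v$. Either $w$ became a neighbor of $u$, in which case $v$ is pruned, or $w$ was itself pruned by some earlier neighbor $w'$ with $d(w',w)<d(u,w)$. I would then iterate this with a descent argument to locate an earlier neighbor inside $L(u,v)$. This step is delicate and is the main obstacle, since pruning is not transitive in general. If the descent fails, I would fall back on the weaker statement that a sub-lune, the ball of radius $D/2$ around the midpoint shrunk by a constant factor, must be empty of neighbors, and I would handle non-neighbors in that region via the greedy ordering.

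Assuming the lune, or a ball of radius $\Omega(D/\text{poly}(d_{\mathrm{int}}))$ inside it, must be empty, the covering event forces $D/\text{poly}(d_{\mathrm{int}}) \le \rho$. The $\text{poly}(d_{\mathrm{int}})$ loss comes from inscribing a ball in the lune: the lune contains a ball of radius $D/2\cdot(\sqrt3-1)$ around the midpoint direction, but on the sphere, curvature and the relation between cap and ball volumes (the $d_{\mathrm{int}}^{\Theta(1)}$ factors, as in \cite[Lemma~2]{laarhoven2017graph}) cost polynomial factors when translating point counts into radii. So every edge, and in particular every step that gets closer to $q$, has length at most $\text{poly}(d_{\mathrm{int}})\cdot(\log N/N)^{1/d_{\mathrm{int}}}$. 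The progress per step is then bounded via $d(u,q)-d(v,q)\le d(u,v)$. A final union bound over all $N^2$ edges is already absorbed by the covering event, which holds simultaneously for all regions and therefore yields the ``with high probability, all steps'' form of the statement.
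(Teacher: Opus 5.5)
Your skeleton matches the paper's proof. An edge $u\to v$ of length $D$ is taken to certify an empty lune $L(u,v)$. Once $D\gtrsim(\log N/N)^{1/d_{\mathrm{int}}}$, the lune holds $\Omega(\log N)$ points in expectation, so Chernoff plus a union bound over pairs rules it out; your covering event plays the same role. The paper does not justify the emptiness step either; it just ``recalls'' that connected nodes have no other node in their lune. You rightly single this out as the crux, but you leave it open. The descent you sketch cannot close it, because for SNG as defined in the paper the claim is false: a candidate is pruned only by an \emph{already accepted} neighbor of $u$ in its lune. Emptiness of the whole lune is the RNG rule, for which your argument would essentially go through. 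Take $u=(0,0)$, $w'=(0,0.6)$, $w=(0.5,0.5)$, $v=(1,0)$. Then $w'$ is accepted first and prunes $w$, since $d(w',w)=\sqrt{0.26}<\sqrt{0.5}=d(u,w)$, but it does not prune $v$, since $d(w',v)=\sqrt{1.36}>1$. So $u\to v$ is an edge although $w\in L(u,v)$. The only other neighbor, $w'$, lies outside the lune, which is exactly where your descent stops.

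Hence your \emph{fallback} is really the missing argument. What the greedy order does give is this: when $v$ is accepted, it is the closest data point to $u$ in the convex cell $\bigcap_i\{x: d(u,x)\le d(w_i,x)\}$ over the previously accepted $w_i$. So this cell intersected with $B(u,D)$ is data-free, and you would then need that region to carry probability mass $\gtrsim\log N/N$. Convexity does not give that. Let $\{e_k\}$ be an orthonormal basis of the orthogonal complement of $v-u$. Neighbors at $u\pm s e_k$ are all accepted under the SNG rule, do not prune $v$, and shrink the surviving cell to a tube of width $s$ around the segment $uv$. Excluding such pinching needs a genuinely probabilistic argument about $u$'s neighbor configuration, which neither your proposal nor the paper supplies.

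Separately, your opening claim contradicts your conclusion. You say SNG has long edges, so only edges decreasing the distance to $q$ will be bounded, yet you conclude that every edge is short and never use $q$. Two constants are also off in high dimension. If a $\rho$-ball holds $C\log N$ points, a $\rho/2$-ball holds only $2^{-d_{\mathrm{int}}}C\log N$, so calibrate the covering event at the smaller radius. And the lune contains the ball of radius $D/2$ about the midpoint of $uv$, so inscribing costs a factor $2$, not $\text{poly}(d_{\mathrm{int}})$.
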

\begin{proof}
    Our argument analyzes the maximum length of an edge in SNG. Recall that if two nodes~$u$ and $v$ are connected, then there must be no other node in their lune. By \cite[Lemma~2]{laarhoven2017graph}, the probability that a node~$w$ is within~$u$ and $v$'s lune is~$\text{poly}(d_{\mathrm{int}}) \cdot d(u,v)^{d_{\mathrm{int}}}$. When~$d(u,v) = \left( \frac{\log N}{N} \right)^{1/d_{\mathrm{int}}}$, this probability becomes $\text{poly}(d_{\mathrm{int}}) \cdot \frac{\log N}{N}$, i.e., the expected number of nodes within the lune is~$\Omega(\log N)$. By applying a standard Chernoff bound, it follows that such a lune is empty with a probability of at most~$1/N^3$. Thus, by a union bound over all pairs of nodes, we can conclude that all edges within SNG have a length of at most $\text{poly}(d_{\mathrm{int}}) \cdot \left( \frac{\log N}{N} \right)^{1/d_{\mathrm{int}}}$ with high probability. Therefore, any edge that brings the search closer to a query must also be at most this long.
\end{proof}
As discussed above, the radius of a query's outer hyper-ball is~$O(N^{-1/d_{\mathrm{int}}})$ in expectation. Thus, since the distance from the entry point to a query can be a constant number, Lemma~\ref{lemma:sng_expected_progress} implies that the shortcut phase requires at least~$\frac{O(1)}{\text{poly}(d_{\mathrm{int}}) \cdot (\log N / N)^{1/d_{\mathrm{int}}}} = \Omega(N^{(1-o(1))/d_{\mathrm{int}}})$ many steps to reach the outer hyper-ball. The actual number of steps required to reach this hyper-ball may be larger than this quantity, since beam search may run into local minima which prevent it from making progress towards a query.

Since each layer of HNSW behaves similarly to SNG, similar concerns around an upper bound on its search cost persist. We leave deriving a tight upper bound for SNG and HNSW as an interesting open question for future work.

\Paragraph{Locally Uniform Exploration Cost}
The exploration phase in any graph-based index searches at most all nodes within the locally uniform neighborhood of a query. Thus, the number of steps in the exploration phase is upper bounded by the number of nodes in this neighborhood. Since~$d_{\text{int}} = o(\log N)$ in the dense regime, Theorem~\ref{thm:local-hyperball-scaling} implies that the number of nodes in the query's uniform neighborhood is at most~$2^{d_{\text{int}}}=N^{d_{\text{int}}/\log N}=N^{o(1)}$. Multiplied by the degree of a node (quantified in Lemma~\ref{lemma:negligible_degree}), this yields a locally uniform exploration cost of~$N^{o(1)}$, matching the third column of Table~\ref{tab:regime-costs-uniform}. Moreover, when~$N \geq 2^{2^{\Theta(d_{\text{int}})}}$, we have $d_{\text{int}} = O(\log \log N)$, making the number of nodes in the query's neighborhood poly-logarithmic in~$N$. Multiplied by the node degree (Lemma~\ref{lemma:negligible_degree}), which is poly-logarithmic in this case, we derive a poly-logarithmic cost for the exploration phase. Thus, the bound in the fourth column of Table~\ref{tab:regime-costs-uniform} also holds.

\begin{figure*}[t]
  \centering
  \includegraphics[width=1\textwidth]{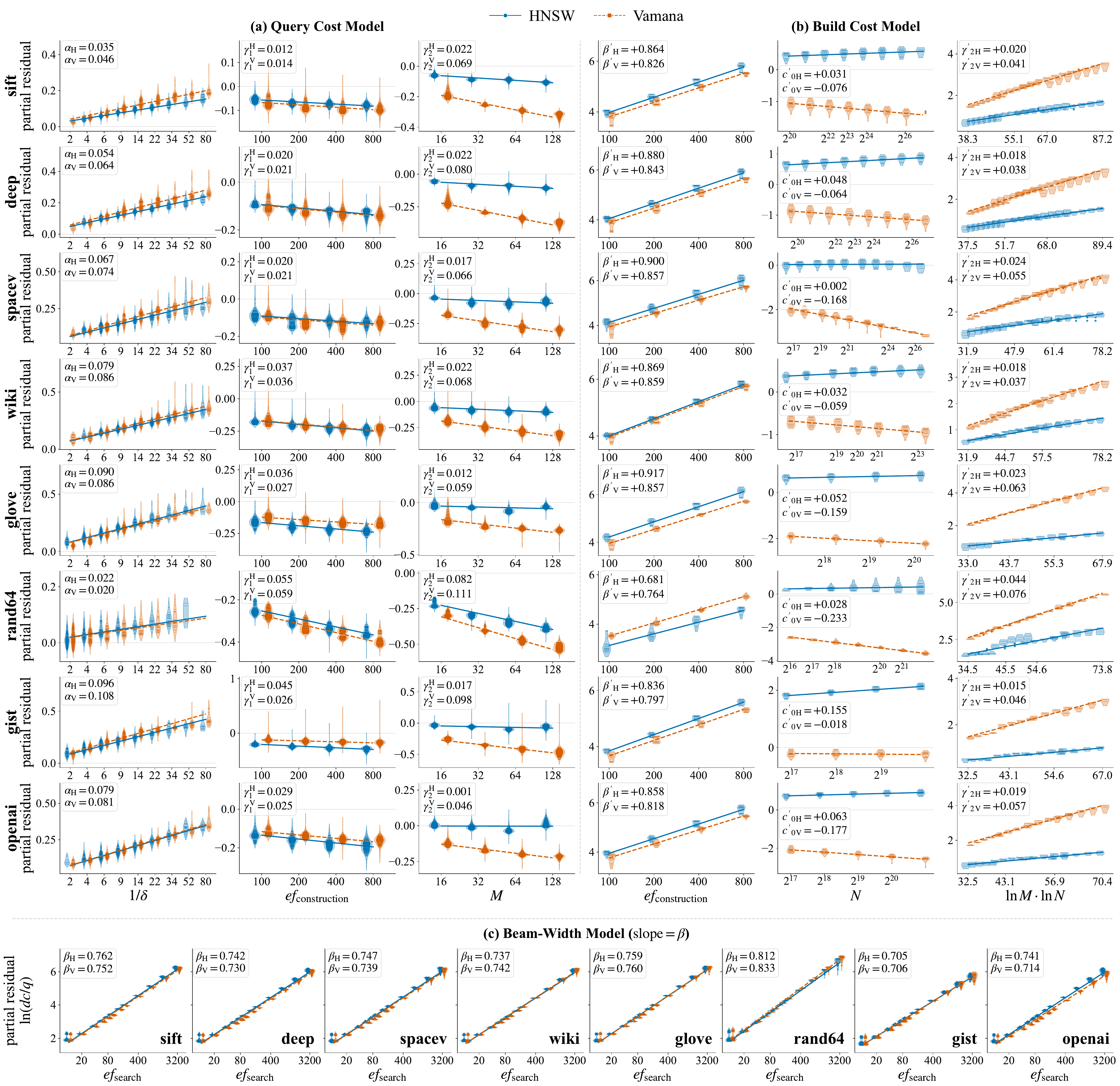}
  \Description{Three groups of partial-regression scatter plots, each with a fitted straight line. Group (a) shows the effects of recall, construction beam width, and degree bound on the query scaling exponent. Group (b) shows the effects of construction beam width, dataset size, and their interaction on average insertion cost. Group (c) plots search beam width against per-query distance computations. In all subfigures the points cluster tightly around the fitted lines.}
  \caption{
  Partial regression plots for the query-, insertion-, and beam-width-cost models; each subfigure shows the partial residual after removing all other fitted terms, so the fitted slope equals the associated model coefficient.
  \textbf{(a)}~Query-side model of Eq.~\ref{eq:c_model}: the effects of $\ln(1/\delta)$, $\ln(ef_{\text{construction}})$, and $\ln M$ on the scaling exponent~$c$.
  \textbf{(b)}~Insertion-cost model of Eq.~\ref{eq:build_model}: the effects of $\ln(ef_{\text{construction}})$, $\ln N$, and the interaction $\ln(M)\ln(N)$ on average insertion cost $\ln(\mathrm{dc}_{\text{build}}/N)$.
  \textbf{(c)}~Beam-width model of Eq.~\ref{eq:beta}: $\ln(ef_{\text{search}})$ vs.\ $\ln(\mathrm{dc}/q)$, centered within each configuration; the linear pattern confirms the power law with slope $\beta$.
  }
  \label{fig:cost_models}
\end{figure*}

\section{Cost Modeling}
\label{sec:cost-model}

The previous sections established that, across many datasets and scales, query cost follows a Sublinear Power Law in the dataset size~$N$, with exponent \(c\), and explained why. A natural next step is to predict the exponent, so that its scaling is known before the data grows. This section develops such models empirically for query cost (Section~\ref{sec:query-cost-model}) and insertion cost (Section~\ref{sec:insert_cost}). Together, they cover the parameter ranges of practical interest and expose the trade-offs between query cost, insertion cost, and recall.

\subsection{Modeling Query Cost}
\label{sec:query-cost-model}

We begin with query cost, asking what determines the exponent \(c\) for a given dataset. Our approach is to model how \(c\) varies with the recall target and the construction parameters of the index, and then to connect the predicted cost back to the $ef_{\text{search}}$ value that realizes it.

\Paragraph{Parameters} We focus on three parameters that govern how \(c\) varies for a given dataset. The first is the recall target. A higher target leaves less room to miss a true nearest neighbor, so the search process must cover more of the local neighborhood and widen its candidate list. The other two are the construction list size \(ef_{\text{construction}}\) and the maximum out-degree \(M\), which together set the quality of the graph and the number of edges it carries. Those edges act as shortcuts that bring every query within fewer hops of its neighborhood, at the cost of more work during construction.

\Paragraph{Expected Effects} Increasing recall should raise $c$, but the rise is not linear in recall. Raising recall from $0.90$ to $0.92$ barely makes ANN search harder, but raising it from $0.97$ to $0.99$ does. To capture the nonlinearity, we quantify the demand for recall as $1/\delta$, where $\delta = 1 - \text{recall}$. In the previous example, the first increment raises $1/\delta$ from $10$ to $12.5$, while the second raises it from $33.3$ to $100$. 
On the other hand, increasing the construction parameters should lower $c$, since a denser graph offers more reliable routes into and within the query's neighborhood. This benefit should diminish and eventually turn harmful: once the graph is dense enough, additional edges add little routing power while inflating the search cost per step. Across the parameter ranges we study (recall up to $0.99$, $ef_{\text{construction}} \in [100, 800]$, $M \in [16, 128]$), this reversal never appears, and we observe $c$ decreasing throughout.

\begin{table}[t]
    \scriptsize
    \centering
    \setlength{\tabcolsep}{2pt}
    \caption{Fitted parameters and goodness of fit per dataset and algorithm.
        (a) Query cost model: coefficients of Eq.~\ref{eq:c_model} and the
        exponent $\beta$ of Eq.~\ref{eq:beta}, with $R^2_c$ and $R^2_\beta$
        the respective fits. (b) Build cost model: coefficients of
        Eq.~\ref{eq:build_model}.}
    \label{tab:model_params}
    \begin{tabular}{ll ccccccc @{\hspace{8pt}\vrule\hspace{8pt}} ccccc}
        & & \multicolumn{7}{c}{\textbf{(a) Query cost model}} &
            \multicolumn{5}{c}{\textbf{(b) Build cost model}} \\
        \cmidrule(l{2pt}r{16pt}){3-9} \cmidrule(l{2pt}r{2pt}){10-14}
        & Dataset & $c_0$ & $\alpha$ & $\gamma_1$ & $\gamma_2$ & $\beta$ & $R^2_c$ & $R^2_\beta$ &
            $\kappa$ & $\beta'$ & $c'_0$ & $\gamma'_2$ & $R^2$ \\
        \midrule
        \multirow{8}{*}{\rotatebox[origin=c]{90}{\textbf{HNSW}}}
        & sift   & .244 & .035 & .012 & .022 & .762 & .919 & .992 & 2.497 & .864 & \phantom{$-$}.031 & .020 & .979 \\
        & deep   & .281 & .054 & .020 & .022 & .742 & .934 & .992 & 2.163 & .880 & \phantom{$-$}.048 & .018 & .983 \\
        & spacev & .181 & .067 & .020 & .017 & .747 & .818 & .992 & 2.653 & .900 & \phantom{$-$}.002 & .024 & .963 \\
        & wiki   & .310 & .079 & .037 & .022 & .737 & .886 & .991 & 2.424 & .869 & \phantom{$-$}.032 & .018 & .991 \\
        & glove  & .336 & .090 & .036 & .012 & .759 & .916 & .993 & 2.147 & .917 & \phantom{$-$}.052 & .023 & .977 \\
        & rand64 & 1.068 & .022 & .055 & .082 & .812 & .912 & .992 & 3.569 & .681 & \phantom{$-$}.028 & .044 & .922 \\
        & gist   & .513 & .096 & .045 & .017 & .705 & .859 & .992 & 1.268 & .836 & \phantom{$-$}.155 & .015 & .991 \\
        & openai & .191 & .079 & .029 & .001 & .741 & .910 & .992 & 2.197 & .858 & \phantom{$-$}.063 & .019 & .981 \\
        \midrule
        \multirow{8}{*}{\rotatebox[origin=c]{90}{\textbf{Vamana}}}
        & sift   & .415 & .046 & .014 & .069 & .752 & .841 & .992 & 3.622 & .826 & $-$.076 & .041 & .977 \\
        & deep   & .479 & .064 & .021 & .080 & .730 & .869 & .992 & 3.388 & .843 & $-$.064 & .038 & .973 \\
        & spacev & .351 & .074 & .021 & .066 & .739 & .825 & .991 & 4.253 & .857 & $-$.168 & .055 & .976 \\
        & wiki   & .447 & .086 & .036 & .068 & .742 & .845 & .992 & 3.137 & .859 & $-$.059 & .037 & .986 \\
        & glove  & .480 & .086 & .027 & .059 & .760 & .835 & .993 & 3.977 & .857 & $-$.159 & .063 & .990 \\
        & rand64 & 1.252 & .020 & .059 & .111 & .833 & .929 & .993 & 5.239 & .764 & $-$.233 & .076 & .996 \\
        & gist   & .670 & .108 & .026 & .098 & .706 & .782 & .991 & 2.654 & .797 & $-$.018 & .046 & .985 \\
        & openai & .334 & .081 & .025 & .046 & .714 & .947 & .991 & 4.492 & .818 & $-$.177 & .057 & .983 \\
        \bottomrule
    \end{tabular}
\end{table}

\Paragraph{Modeling the exponent~$c$}
The simplest model consistent with these effects treats them as independent: each parameter contributes its own term, and the exponent~$c$ is their sum. Figure~\ref{fig:cost_models}-a supports this decomposition empirically. Each column isolates the effect of one factor on the exponent (\(1/\delta\), \(ef_{\text{construction}}\), \(M\)). We subtract the fitted contributions of the other two, and plot what remains, the \emph{partial residual}, against the factor in question. In each case the partial residual is approximately linear when the factor is shown on a logarithmic scale, so each factor contributes an additive logarithmic term to the exponent. We therefore model the scaling exponent as
\begin{equation}
\label{eq:c_model}
    c \;=\; c_0
    \;+\; \alpha \,\ln\!\left(\tfrac{1}{\delta}\right)
    \;-\; \gamma_1 \,\ln(ef_{\text{construction}})
    \;-\; \gamma_2 \,\ln(M).
\end{equation}
Here \(c_0\) is the baseline set by the dataset's geometry, \(\alpha\) the cost of higher demand for recall, and \(\gamma_1\) and \(\gamma_2\) the benefits of \(ef_{\text{construction}}\) and \(M\), respectively.

\Paragraph{Goodness of fit} We fit the model separately for each dataset and index construction algorithm. Table~\ref{tab:model_params}-a reports the fitted coefficients and goodness of fit. The model achieves an average \(R^2\) of approximately \(0.88\), meaning it explains about \(88\%\) of the variation in the measured exponent across datasets and algorithms. The fitted coefficients are consistent with the slopes in Figure~\ref{fig:cost_models}-a, showing the additive log-linear structure directly on the data.

\Paragraph{Effect of $ef_\text{search}$} The model above explains how fast the number of distance computations scales as the dataset grows, i.e., how large the exponent \(c\) is for a given recall level and construction parameters. It leaves open a practical question: for a dataset of a given size, how should the candidate list size during search, \(ef_{\text{search}}\), be tuned so that the search performs the distance computations predicted by Equation~\ref{eq:c_model} and thus reaches the recall target? We answer this by observing that, for a given graph-based index, distance computations per query scale as a Sublinear Power Law in \(ef_{\text{search}}\):
\begin{equation}
\label{eq:beta}
    \text{dc}/q \;\propto\; ef_{\text{search}}^{\,\beta}.
\end{equation}
Figure~\ref{fig:cost_models}-c shows this relationship on log-log axes, where the linear pattern indicates a power law, this time between \(ef_{\text{search}}\) and the number of distance computations. The fit is tight across all datasets, with \(R^2 > 0.98\) as reported in Table~\ref{tab:model_params}-a.

\subsection{Modeling Insertion Cost}
\label{sec:insert_cost}

As discussed in Section~\ref{sec:background}, the insertion process runs a beam search to locate the new point's neighbors. This beam search performs the same distance computations that govern query cost, and it dominates the cost of insertion~\cite{HNSW, DiskANN}. We therefore measure the total number of distance computations during index construction, denoted $\mathrm{dc}_{\text{build}}$, and study how the average insertion cost $\mathrm{dc}_{\text{build}}/N$ changes with the dataset size and the construction parameters.

\Paragraph{Parameters} Since insertion runs the same beam search as a query, the insertion cost model shares the structure of the query cost models. The key difference is that insertion has no recall target, so recall demand drops out of the model. The parameters are the dataset size~$N$ and the two construction parameters, the graph out-degree $M$ and the candidate list size $ef_{\text{construction}}$.

\Paragraph{Expected Effects} A larger~$N$ means each new point searches a larger graph, so insertion cost should rise with the dataset, and since query cost rises with~$N$ as a Sublinear Power Law, we expect insertion cost to do the same; we empirically verify this power law. A larger $ef_{\text{construction}}$ makes each insertion explore more of the graph before identifying the new node's neighbors, so, just as query cost scales as $ef_{\text{search}}^{\,\beta}$ (Equation~\ref{eq:beta}), insertion cost should grow as a Sublinear Power Law in $ef_{\text{construction}}$. A higher $M$ forces the beam search to scan more candidates per step, but may also help it converge in fewer steps; the net effect is settled empirically.

\Paragraph{Modeling the Scaling} Figure~\ref{fig:cost_models}-b isolates each effect through partial regressions, as in Section~\ref{sec:query-cost-model}. At fixed $M$, the average insertion cost is log-linear in both $ef_{\text{construction}}$ and~$N$, as anticipated, verifying the power-law scaling on both counts. Raising $M$ increases the power-law exponent of~$N$, so a denser graph makes build cost grow faster with dataset size; $M$ therefore belongs inside the exponent of~$N$. We thus model
\begin{equation}
    \label{eq:build_model}
    \mathrm{dc}_{\text{build}}
    =
    e^{\kappa}\,
    (ef_{\text{construction}})^{\beta'}\,
    N^{\,c'},
    \qquad
    c' = 1 + c'_0 + \gamma'_2 \ln M.
\end{equation}
Equivalently,
\[
\ln\!\left(\frac{\mathrm{dc}_{\text{build}}}{N}\right)
=
\kappa
+ \beta' \ln(ef_{\text{construction}})
+ c'_0 \ln N
+ \gamma'_2 \ln(M)\ln(N).
\]
The primed symbols mark build-phase analogues of the query model's coefficients of Section~\ref{sec:query-cost-model}. The build exponent $c'$ explains how construction cost scales with the dataset size~$N$, the build-phase analogue of the query exponent~$c$. Within $c'$, the leading~$1$ is the trivial cost of performing~$N$ insertions, $c'_0$ captures how per-insertion search cost grows with index size, and $\gamma'_2$ captures how a larger $M$ steepens that growth. The intercept $\kappa$ depends on the data geometry and the graph construction algorithm, and $\beta'$ is the analogue of $\beta$ in Equation~\ref{eq:beta}, the exponent of the candidate list size.

\Paragraph{Goodness of Fit} Table~\ref{tab:model_params}-b lists the fitted coefficients for HNSW and Vamana. The fits are tight ($R^2 \geq 92\%$), and the slopes we recover line up with the partial regressions in Figure~\ref{fig:cost_models}-b. Except for the synthetic Rand64 dataset, $\beta'$ sits in a fairly narrow band, $0.80$--$0.92$, running $0.07$--$0.16$ above $\beta$. In other words, doubling $ef_{\text{construction}}$ raises the average insertion cost by about $1.8\times$. Insertion is thus more sensitive to its candidate list size than query cost is to $ef_{\text{search}}$ (Equation~\ref{eq:beta}). This gap comes from the extra distance computations insertion has to do beyond plain beam search, mostly reverse-link rewiring. This rewiring is triggered whenever a candidate neighbor has already hit its degree budget. Rand64 breaks this pattern: insertion is less sensitive to candidate list size than search, with $\beta'$ ($0.68$--$0.76$) staying below $\beta$ ($0.81$--$0.83$).

\section{Conclusion}
The literature has long held that, at fixed recall, the query cost of graph-based indexes scales poly-logarithmically. Our experiments and theory confine this claim to scales beyond a transition point. Below that point, cost follows a Sublinear Power Law, growing in step with $N^{c}$, and most datasets in practice stay in this regime up to their full size. The power law appears across eight datasets, recall targets from 90\% to 99\%, hard and easy queries, and many configurations up to billion scale. Cost bends toward poly-logarithmic only once the dataset densely samples its underlying distribution. Our unifying theory explains both regimes through the intrinsic dimensionality the dataset exhibits at its current size, reconciling our measurements with the logarithmic claims as two regimes of the same scaling law. The resulting cost models let practitioners set index parameters in a principled way and sustain a recall target as the dataset grows.

\bibliographystyle{ACM-Reference-Format}
\bibliography{sample}

\end{document}